 \newcommand{\Rmnum}[1]{\expandafter\@slowromancap\romannumeral #1@}
\newtheorem{theorem}{Theorem}[section]
\newtheorem{proposition}[theorem]{Proposition}
\newtheorem{lemma}[theorem]{Lemma}
\newtheorem{assumption}[theorem]{Assumption}
\newcommand{\C}{{\mathbb C}}
\newcommand{\be}{\begin{equation}}
\newcommand{\ee}{\end{equation}}
\newcommand{\bea}{\begin{eqnarray}}
\newcommand{\eea}{\end{eqnarray}}
\newcommand{\ba}{\begin{array}}
\newcommand{\ea}{\end{array}}
\newcommand{\id}{\mathbb{I}}
\newcommand{\re}{\mathrm{Re}}
\newcommand{\eps}{\varepsilon}
\newcommand{\sig}{\sigma}
\newcommand{\lam}{\lambda}
\newcommand{\Lam}{\Lambda}
\newcommand{\gam}{\gamma}
\newcommand{\Om}{\Omega}
\newcommand{\dta}{\delta}
\newcommand{\Dta}{\Delta}
\newcommand{\tha}{\theta}
\newcommand{\pt}{\partial}
\numberwithin{equation}{section}
\begin{document}
\title[RHP for the Sasa-Satsuma equation on the half-line]{The Fokas method to the Sasa-Satsuma equation on the half-line}

\author[J.Xu]{Jian Xu}
\address{School of Mathematical Sciences\\
Fudan University\\
Shanghai 200433\\
People's  Republic of China}
\email{11110180024@fudan.edu.cn}

\author[E.Fan]{Engui Fan*}
\address{School of Mathematical Sciences, Institute of Mathematics and Key Laboratory of Mathematics for Nonlinear Science\\
Fudan University\\
Shanghai 200433\\
People's  Republic of China}
\email{correspondence author:faneg@fudan.edu.cn}

\keywords{Riemann-Hilbert problem, Sasa-Satsuma equation, Initial-boundary value problem}

\date{\today}

\begin{abstract}
We present a Riemann-Hilbert problem formalism for the initial-boundary value problem for the Sasa-Satsuma(SS) equation:
on the half-line. And we also analysis the global relation in this paper.

\end{abstract}

\maketitle

\section{Introduction}

Several of the most important PDEs in mathematics and physics are integrable. Integrable PDEs can be analyzed by means of the
Inverse Scattering Transform (IST) formalism. Until the 1990s the IST methodology was pursued almost entirely for pure initial value problems. However, in many laboratory and field situations, the wave motion is initiated by what corresponds to the imposition of boundary conditions rather than initial conditions. This naturally leads to the formulation of an initial-boundary value (IBV) problem instead of a pure initial value problem.
\par
In 1997, Fokas announced a new unified approach for the analysis of IBV problems for linear and nonlinear integrable PDEs \cite{f1,f2}(see also \cite{f3}). The Fokas method provides a generalization of the IST formalism from initial value to IBV problems, and over the last fifteen years, this method has been used to analyze boundary value problems for several of the most important integrable equations with $2\times 2$ Lax pairs, such as the Korteweg–de Vries, the nonlinear Schr\"o dinger, the sine-Gordon, and the stationary axisymmetric Einstein equations, see e.g. \cite{abmfs1,l2}. Just like the IST on the line, the unified method yields an expression for the solution of an IBV problem in terms of the solution of a Riemann-Hilbert problem. In particular, the asymptotic behavior of the solution can be analyzed in an effective way by using this Riemann-Hilbert problem and by employing the nonlinear version of the steepest descent method introduced by Deift and Zhou \cite{dz}.
\par
It is well known that the nonlinear Schr\"odinger(NLS) equation
\be \label{NLSe}
iq_T+\frac{1}{2}q_{XX}+|q|^2q=0
\ee
describes slowly varying wave envelopes in dispersive media and arises in various physical systems such as water waves, plasma physics, solid-state physics and nonlinear optics. One of the most successful among them is the description of optical solitons in fibers. But, by the advancement of experomenal accuracy, several phenomena which can not be explained by equation (\ref{NLSe}) have been observed. In order to understand such phenomena, Kodama and Hasegawa proposed a higer-order nonlinear Schr\"odinger equation
\be \label{HNLSe}
iq_T+\frac{1}{2}q_{XX}+|q|^2q+i\eps \{\beta_1q_{xxx}+\beta_2|q|^2q_{X}+\beta_3q(|q|^2)_X\}=0.
\ee
\par
In general, equation (\ref{HNLSe}) may not be completely integrable. However, if some restrictions are imposed on the real parameters $\beta_1,\beta_2$ and $\beta_3$, then we can apply the IST to solve its initial value problems. Until now, the following four cases besides the NLS equation itself are konwn to be solvable:
\begin{itemize}
\item the derivative NLS equation-type \Rmnum{1}($\beta_1:\beta_2:\beta_3$=0:1:1),
\item the derivative NLS equation-type \Rmnum{2}($\beta_1:\beta_2:\beta_3$=0:1:0),
\item the Hirota equation($\beta_1:\beta_2:\beta_3$=1:6:0),
\item the Sasa-Satsuma equation($\beta_1:\beta_2:\beta_3$=1:6:3).
       \be\label{SSe}
       iq_T+\frac{1}{2}q_{XX}+|q|^2q+i\eps (q_{XXX}+6|q|^2q_X+3q(|q|^2)_X)=0
       \ee
\end{itemize}
Recently, Lenells develop a methodology for analyzing IBV problems for integrable evolution equations with Lax pairs involving $3\times 3$ matrices \cite{l3}. He also used this method to analyze the Degasperis-Procesi equation in \cite{l4}. In this paper we analyze the initial-boundary value problem of the Sasa-Satsuma equation on the half-line by using this method. The IST formalism for the initial value problem of the Sasa-Satsuma equation has been obtained in \cite{ss}.
\par
According to \cite{ss} we introduce variable transformations,
\begin{subequations} \label{vartrans}
\be\label{uqtrans}
u(x,t)=q(X,T)\exp\{\frac{-i}{6\eps}(X-\frac{T}{18\eps})\},
\ee
\be
t=T,
\ee
\be
x=X-\frac{T}{12\eps}.
\ee
\end{subequations}
Then equation (\ref{HNLSe}) is reduce to a complex modified KdV-type equation
\be\label{KdV-typee}
u_t+\eps\{u_{xxx}+6|u|^2u_x+3u(|u|^2)_x\}=0.
\ee
\par
\par
{\bf Organization of the paper.}In section 2 we perform the spectral analysis of the associated Lax pair. And we formulate the main Riemann-Hilbert problem in section 3. We also analysis the global relation in section 4.

\section{Spectral analysis}
The Lax pair of equation (\ref{KdV-typee}) is \cite{ss},
\begin{subequations}
\be\label{Lax-x}
\Psi_x=U\Psi,\quad \Psi=\left(\ba{c}\Psi_1\\\Psi_2\\\Psi_3\ea\right).
\ee
\be\label{Lax-t}
\Psi_t=V\Psi.
\ee
\end{subequations}
where
\be\label{Udef}
U=-ik\Lam+V_1.
\ee
and
\be\label{Vdef}
V=-4i\eps k^3\Lam+V_2
\ee
here
\be\label{Lamdef}
\Lam=\left(\ba{ccc}1&0&0\\0&1&0\\0&0&-1\ea\right),V_1=\left(\ba{ccc}0&0&u\\0&0&\bar u\\-\bar u&-u&0\ea\right),V_2=k^2V_2^{(2)}+kV_2^{(1)}+V_2^{(0)}.
\ee
where
\be
\ba{l}
V_2^{(2)}=4\eps\left(\ba{ccc}0&0&u\\0&0&\bar u\\-\bar u&-u&0\ea\right),\\
V_2^{(1)}=2i\eps\left(\ba{ccc}|u|^2&u^2&u_x\\\bar u^2&|u|^2&\bar u_x\\\bar u_x&u_x&-2|u|^2\ea\right),\\
V_2^{(0)}=-4|u|^2\eps\left(\ba{ccc}0&0&u\\0&0&\bar u\\-\bar u&-u&0\ea\right)-\eps\left(\ba{ccc}0&0&u_{xx}\\0&0&\bar u_{xx}\\-\bar u_{xx}&-u_{xx}&0\ea\right)+\eps(u\bar u_x-u_x\bar u)\left(\ba{ccc}1&0&0\\0&-1&0\\0&0&0\ea\right)
\ea
\ee
In the following, we let $\eps=1$ for the convenient of the analysis.

\subsection{The closed one-form}
Suppose that $u(x,t)$ is sufficiently smooth function of $(x,t)$ in the half-line domain $\Om=\{0<x<\infty,0<t<T\}$ which decay as $x\rightarrow \infty$. Introducing a new eigenfunction $\mu(x,t,k)$ by
\be\label{neweigfun}
\Psi=\mu e^{-i\Lam kx-4i\Lam k^3t}
\ee
then we find the Lax pair equations
\be\label{muLaxe}
\left\{
\ba{l}
\mu_x+[ik\Lam,\mu]=V_1\mu,\\
\mu_t+[4ik^3\Lam,\mu]=V_2\mu.
\ea
\right.
\ee
the equations in (\ref{muLaxe}) can be written in differential form as
\be\label{mudiffform}
d(e^{(ikx+4ik^3t)\hat \Lam}\mu)=W,
\ee
where $W(x,t,k)$ is the closed one-form defined by
\be\label{Wdef}
W=e^{(ikx+4ik^3t)\hat \Lam}(V_1dx+V_2dt)\mu.
\ee

\subsection{The $\mu_j$'s}
We define three eigenfunctions $\{\mu_j\}_1^3$ of (\ref{muLaxe}) by the Volterra integral equations
\be\label{mujdef}
\mu_j(x,t,k)=\id+\int_{\gam_j}e^{(-i kx-4i k^3t)\hat \Lam}W_j(x',t',k).\qquad j=1,2,3.
\ee
where $W_j$ is given by (\ref{Wdef}) with $\mu$ replaced with $\mu_j$, and the contours $\{\gam_j\}_1^3$ are showed in Figure 1.
\begin{figure}[th]
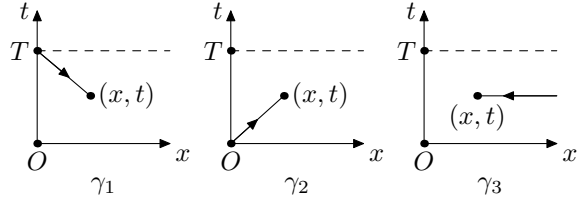

\centering
\includegraphics{SS-HL.1}
\includegraphics{SS-HL.2}
\includegraphics{SS-HL.3}
\caption{The three contours $\gam_1,\gam_2$ and $\gam_3$ in the $(x,t)-$domain.}
\end{figure}
The first, second and third column of the matrix equation (\ref{mujdef}) involves the exponentials
\be
\ba{ll}
\mbox{$[\mu_j]_1$:}&e^{2ik(x-x')+8ik^3(t-t')},\\
\mbox{$[\mu_j]_2$:}&e^{2ik(x-x')+8ik^3(t-t')},\\
\mbox{$[\mu_j]_3$:}&e^{-2ik(x-x')-8ik^3(t-t')},e^{-2ik(x-x')-8ik^3(t-t')}.
\ea
\ee
And we have the following inequalities on the contours:
\be
\ba{ll}
\gam_1:&x-x'\ge 0,t-t'\le 0,\\
\gam_2:&x-x'\ge 0,t-t'\ge 0,\\
\gam_3:&x-x'\le 0.
\ea
\ee
So, these inequalities imply that the functions $\{\mu_j\}_1^3$ are bounded and analytic for $k\in\C$ such that $k$ belongs to
\be\label{mujbodanydom}
\ba{ll}
\mu_1:&(D_2,D_2,D_3),\\
\mu_2:&(D_1,D_1,D_4),\\
\mu_3:&(D_3\cup D_4,D_3\cup D_4,D_1\cup D_2).
\ea
\ee
where $\{D_n\}_1^4$ denote four open, pairwisely disjoint subsets of the Riemann $k-$sphere showed in Figure 2.
\begin{figure}[th]
\centering
\includegraphics{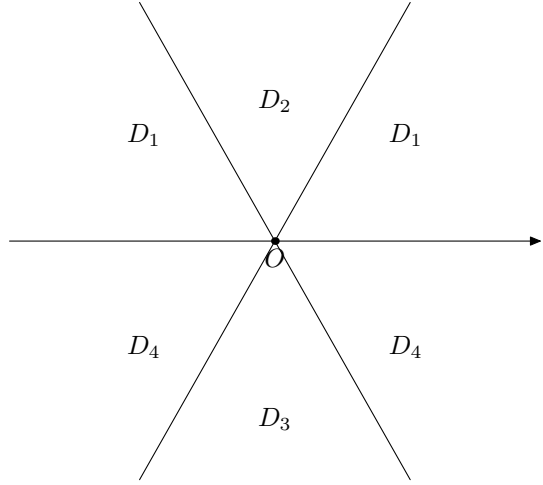}
\caption{The sets $D_n$, $n=1,\ldots ,4$, which decompose the complex $k-$plane.}
\end{figure}
And the sets $\{D_n\}_1^4$ has the following properties:
\[
\ba{l}
D_1=\{k\in\C|\re{l_1}=\re{l_2}>\re{l_3},\re{z_1}=\re{z_2}>\re{z_3}\},\\
D_2=\{k\in\C|\re{l_1}=\re{l_2}>\re{l_3},\re{z_1}=\re{z_2}<\re{z_3}\},\\
D_1=\{k\in\C|\re{l_1}=\re{l_2}<\re{l_3},\re{z_1}=\re{z_2}>\re{z_3}\},\\
D_1=\{k\in\C|\re{l_1}=\re{l_2}<\re{l_3},\re{z_1}=\re{z_2}<\re{z_3}\},\\
\ea
\]
where $l_i(k)$ and $z_i(k)$ are the diagonal entries of matrices $-ik\Lam$ and $-4ik^3\Lam$, respectively.
\par
In fact, for $x=0$, $\mu_1(0,t,k)$ has enlarged domain of boundedness: $(D_2\cup D_4,D_2\cup D_4,D_1\cup D_3)$, and $\mu_2(0,t,k)$ has enlarged domain of boundedness: $(D_1\cup D_3,D_1\cup D_3,D_2\cup D_4)$.

\subsection{The $M_n$'s}
For each $n=1,\ldots,4$, define a solution $M_n(x,t,k)$ of (\ref{muLaxe}) by the following system of integral equations:
\be\label{Mndef}
(M_n)_{ij}(x,t,k)=\dta_{ij}+\int_{\gam_{ij}^n}(e^{(-i kx-4i k^3t)\hat \Lam}W_n(x',t',k))_{ij},\quad k\in D_n,\quad i,j=1,2,3.
\ee
where $W_n$ is given by (\ref{Wdef}) with $\mu$ replaced with $M_n$, and the contours $\gam_{ij}^n$, $n=1,\ldots,4$, $i,j=1,2,3$ are defined by
\be\label{gamijndef}
\gam_{ij}^n=\left\{\ba{lclcl}\gam_1&if&\re l_i(k)<\re l_j(k)&and&\re z_i(k)\ge\re z_j(k),\\\gam_2&if&\re l_i(k)<\re l_j(k)&and&\re z_i(k)<\re z_j(k),\\\gam_3&if&\re l_i(k)\ge\re l_j(k)&&.\\\ea\right.\quad \mbox{for }\quad k\in D_n.
\ee

\par
The following proposition ascertains that the $M_n$'s defined in this way have the properties required for the formulation of a Riemann-Hilbert problem.
\begin{proposition}
For each $n=1,\ldots,4$, the function $M_n(x,t,k)$ is well-defined by equation (\ref{Mndef}) for $k\in \bar D_n$ and $(x,t)\in \Om$. For any fixed point $(x,t)$, $M_n$ is bounded and analytic as a function of $k\in D_n$ away from a possible discrete set of singularities $\{k_j\}$ at which the Fredholm determinant vanishes. Moreover, $M_n$ admits a bounded and contious extension to $\bar D_n$ and
\be\label{Mnasy}
M_n(x,t,k)=\id+O(\frac{1}{k}),\qquad k\rightarrow \infty,\quad k\in D_n.
\ee
\end{proposition}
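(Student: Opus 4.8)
The plan is to follow the standard Fokas-method argument for establishing that the sectionally defined eigenfunctions $M_n$ of a $3\times 3$ Lax pair solve a Riemann--Hilbert problem, adapting the $2\times 2$ treatment of Fokas and the $3\times 3$ treatment of Lenells \cite{l3} to the present contours $\gam_{ij}^n$. First I would recast the integral system (\ref{Mndef}) as a Fredholm equation: writing out the $(ij)$ entry explicitly, the exponential prefactor is $e^{(l_i-l_j)(x-x')+(z_i-z_j)(t-t')}$, and by the very definition (\ref{gamijndef}) of $\gam_{ij}^n$ the contour is chosen precisely so that along it this exponential is bounded (when $\re l_i<\re l_j$ one integrates along $\gam_1$ or $\gam_2$ where $x-x'\ge 0$, picking up the sign so that $\re(l_i-l_j)(x-x')\le 0$, and the $t$-piece is controlled by the accompanying choice between $\gam_1$ and $\gam_2$ according to the sign of $\re z_i-\re z_j$; when $\re l_i\ge\re l_j$ one uses $\gam_3$ where $x-x'\le 0$). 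Hence for $k$ in the closed set $\bar D_n$ the kernel is bounded, and since $u(x,t)$ is smooth and decays as $x\to\infty$ on $\Om$, the kernel is integrable along the (finite-in-$t$, semi-infinite-in-$x$) contours; this gives a well-defined Volterra/Fredholm operator on a suitable Banach space of bounded continuous matrix functions, and I would invoke the standard Neumann-series / Fredholm-alternative argument to get a unique solution $M_n$ for $(x,t)\in\Om$, $k\in\bar D_n$, away from the discrete zero set $\{k_j\}$ of the Fredholm determinant.

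Next I would establish analyticity in $k\in D_n$: on the open set $D_n$ the inequalities defining $\gam_{ij}^n$ are strict, so the exponentials decay, the kernel depends analytically on $k$, and the convergence of the Neumann series (or analytic Fredholm theory) is locally uniform in $k$; therefore $M_n$ is analytic in $D_n$ except at the $\{k_j\}$, where it has at worst poles. For the continuous extension to $\bar D_n$ one checks that at the boundary rays (where some $\re l_i=\re l_j$ or $\re z_i=\re z_j$) the relevant exponential degenerates to a bounded oscillatory factor rather than blowing up, so dominated convergence gives continuity of the integral up to $\bar D_n$; this is where the specific choice of $\gam_3$ in the borderline case $\re l_i\ge\re l_j$ of (\ref{gamijndef}) matters, and I would treat those boundary rays carefully case by case.

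For the asymptotics (\ref{Mnasy}) I would substitute the large-$k$ expansion of $W_n$ — recalling $V_1=O(1)$, $V_2=k^2V_2^{(2)}+kV_2^{(1)}+V_2^{(0)}$ — into (\ref{Mndef}) and integrate by parts in $x'$ (and $t'$) along each contour: each integration by parts produces a factor $1/k$ from the inverse of $l_i-l_j$ or $z_i-z_j$ together with boundary terms that are controlled by the decay of $u$ and its derivatives. The leading term is $\id$ (the $\dta_{ij}$), and the $O(1/k)$ estimate follows once one observes that the $k^2$ and $k$ growth in $V_2$ is exactly compensated by the $k^{-2},k^{-1}$ gains from the $\hat\Lam$-conjugated exponential structure — this is the standard WKB/dressing bookkeeping and I would carry it out only schematically, noting that the off-diagonal cancellations are forced by the block structure of $\Lam$.

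The main obstacle I anticipate is not any single estimate but the bookkeeping of the twelve contour assignments in (\ref{gamijndef}) and verifying uniformly over $\bar D_n$ that the kernel is bounded and that the boundary rays cause no loss: in the $3\times 3$ setting there are more competing exponentials than in the classical $2\times 2$ case, and one must be sure that the tie-breaking rule between $\gam_1$ and $\gam_2$ (governed by $\re z_i$ vs.\ $\re z_j$) is consistent with the decay needed in both the $x$- and $t$-directions simultaneously on the bounded-in-$t$ domain $\Om$. Once that combinatorial check is done, the existence, analyticity, continuity, and normalization all follow from the by-now-routine Fredholm and integration-by-parts machinery.
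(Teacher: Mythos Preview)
Your proposal is correct and, for the well-definedness/boundedness/analyticity part, is essentially a sketch of the argument the paper simply cites from Lenells \cite{l3}, appendix B; there is no substantive difference there.

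For the asymptotic (\ref{Mnasy}), however, you take a genuinely different route. The paper does not integrate by parts in the integral equations; instead it postulates a formal $1/k$-expansion $M=M_0+M^{(1)}/k+M^{(2)}/k^2+\cdots$ and substitutes it directly into the \emph{differential} Lax pair (\ref{muLaxe}), matching powers of $k$ to force $M_0=\id$ and to determine the higher coefficients recursively (this is carried out in detail in Appendix~\ref{apdix1}). Your integration-by-parts argument on the Volterra/Fredholm equations is more directly analytical and yields the $O(1/k)$ bound without assuming an expansion a priori; the paper's Lax-pair method is more algebraic and has the practical advantage of producing the explicit coefficients $\mu^{(1)},\mu^{(2)},\ldots$ that are needed later (e.g.\ in (\ref{mujasykinf}) and in the Dirichlet-to-Neumann analysis of Section~4). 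Both are standard and valid; the paper's choice is driven by the need for those explicit formulas downstream.
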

\begin{proof}
The bounedness and analyticity properties are established in appendix B in \cite{l3}. And substituting the expansion
\[
M=M_0+\frac{M^{(1)}}{k}+\frac{M^{(2)}}{k^2}+\cdots,\qquad k\rightarrow \infty.
\]
into the Lax pair (\ref{muLaxe}) and comparing the terms of the same order of $k$ yield the equation (\ref{Mnasy}).
\end{proof}

\subsection{The jump matrices}

We define spectral functions $S_n(k)$, $n=1,\ldots,4$, and
\be\label{Sndef}
S_n(k)=M_n(0,0,k),\qquad k\in D_n,\quad n=1,\ldots,4.
\ee
Let $M$ denote the sectionally analytic function on the Riemann $k-$sphere which equals $M_n$ for $k\in D_n$. Then $M$ satisfies the jump conditions
\be\label{Mjump}
M_n=M_mJ_{m,n},\qquad k\in \bar D_n\cap \bar D_m,\qquad n,m=1,\ldots,4,\quad n\ne m,
\ee
where the jump matrices $J_{m,n}(x,t,k)$ are defined by
\be\label{Jmndef}
J_{m,n}=e^{(-i kx-4i k^3t)\hat \Lam}(S_m^{-1}S_n).
\ee
According to the definition of the $\gam^n$, we find that
\be\label{gamndef}
\ba{ll}
\gam^1=\left(\ba{lll}\gam_3&\gam_3&\gam_3\\\gam_3&\gam_3&\gam_3\\\gam_2&\gam_2&\gam_3\ea\right)&
\gam^2=\left(\ba{lll}\gam_3&\gam_3&\gam_3\\\gam_3&\gam_3&\gam_3\\\gam_1&\gam_1&\gam_3\ea\right)\\
\gam^3=\left(\ba{lll}\gam_3&\gam_3&\gam_1\\\gam_3&\gam_3&\gam_1\\\gam_3&\gam_3&\gam_3\ea\right)&
\gam^4=\left(\ba{lll}\gam_3&\gam_3&\gam_2\\\gam_3&\gam_3&\gam_2\\\gam_3&\gam_3&\gam_3\ea\right).
\ea
\ee

\subsection{The adjugated eigenfunctions}
We will also need the analyticity and boundedness properties of the minors of the matrices $\{\mu_j(x,t,k)\}_1^3$. We recall that the adjugate matrix $X^A$ of a $3\times 3$ matrix $X$ is defined by
\[
X^A=\left(
\ba{ccc}
m_{11}(X)&-m_{12}(X)&m_{13}(X)\\
-m_{21}(X)&m_{22}(X)&-m_{23}(X)\\
m_{31}(X)&-m_{32}(X)&m_{33}(X)
\ea
\right),
\]
where $m_{ij}(X)$ denote the $(ij)$th minor of $X$.
\par
It follows from (\ref{muLaxe}) that the adjugated eigenfunction $\mu^A$ satisfies the Lax pair
\be\label{muadgLaxe}
\left\{
\ba{l}
\mu_x^A-[ik\Lam,\mu^A]=-V_1^T\mu^A,\\
\mu_t^A-[4ik^3\Lam,\mu^A]=-V_2^T\mu^A.
\ea
\right.
\ee
where $V^T$ denote the transform of a matrix $V$.
Thus, the eigenfunctions $\{\mu_j^A\}_1^3$ are solutions of the integral equations
\be\label{muadgdef}
\mu_j^A(x,t,k)=\id-\int_{\gam_j}e^{ik(x-x')+4ik^3(t-t')\hat \Lam}(V_1^Tdx+V_2^T)\mu^A,\quad j=1,2,3.
\ee
Then we can get the following analyticity and boundedness properties:
\be\label{mujadgbodanydom}
\ba{ll}
\mu_1^A:&(D_3,D_3,D_2),\\
\mu_2^A:&(D_4,D_4,D_1),\\
\mu_3^A:&(D_1\cup D_2,D_1\cup D_2,D_3\cup D_4).
\ea
\ee
In fact, for $x=0$, $\mu_1^A(0,t,k)$ has enlarged domain of boundedness: $(D_1\cup D_3,D_1\cup D_3,D_2\cup D_4)$, and $\mu_2^A(0,t,k)$ has enlarged domain of boundedness: $(D_2\cup D_4,D_2\cup D_4,D_1\cup D_3)$.

\subsection{The $J_{m,n}$'s computation}

Let us define the $3\times 3-$matrix value spectral functions $s(k)$ and $S(k)$ by
\begin{subequations}\label{sSdef}
\be\label{mu3mu2s}
\mu_3(x,t,k)=\mu_2(x,t,k)e^{(-ikx-4ik^3t)\hat \Lam}s(k), 
\ee
\be\label{mu1mu2S}
\mu_1(x,t,k)=\mu_2(x,t,k)e^{(-ikx-4ik^3t)\hat \Lam}S(k), 
\ee
\end{subequations}
Thus,
\be\label{sSmu3mu1}
s(k)=\mu_3(0,0,k),\qquad S(k)=\mu_1(0,0,k).
\ee
And we deduce from the properties of $\mu_j$ and $\mu_j^A$ that $s(k)$ and $S(k)$ have the following boundedness properties:
\[
\ba{ll}
s(k):&(D_3\cup D_4,D_3\cup D_4,D_1\cup D_2),\\
S(k):&(D_2\cup D_4,D_2\cup D_4,D_1\cup D_3),\\
s^A(k):&(D_1\cup D_2,D_1\cup D_2,D_3\cup D_4),\\
S^A(k):&(D_1\cup D_3,D_1\cup D_3,D_2\cup D_4).
\ea
\]
Moreover,
\be\label{MnSnrel}
M_n(x,t,k)=\mu_2(x,t,k)e^{(-ikx-4ik^3t)\hat\Lam}S_n(k),\quad k\in D_n.
\ee
\begin{proposition}
The $S_n$ can be expressed in terms of the entries of $s(k)$ and $S(k)$ as follows:
\begin{subequations}\label{Sn}
\be
\ba{l}
S_1=\left(\ba{ccc}\frac{m_{22}(s)}{s_{33}}&\frac{m_{21}(s)}{s_{33}}&s_{13}\\\frac{m_{12}(s)}{s_{33}}&\frac{m_{11}(s)}{s_{33}}
&s_{23}\\0&0&s_{33}\ea\right),\\
S_2=\left(\ba{ccc}\frac{m_{22}(s)m_{33}(S)-m_{32}(s)m_{23}(S)}{(s^TS^A)_{33}}&\frac{m_{21}(s)m_{33}(S)-m_{31}(s)m_{23}(S)}{(s^TS^A)_{33}}&s_{13}\\
\frac{m_{12}(s)m_{33}(S)-m_{32}(s)m_{13}(S)}{(s^TS^A)_{33}}&\frac{m_{11}(s)m_{33}(S)-m_{31}(s)m_{13}(S)}{(s^TS^A)_{33}}&s_{23}\\
\frac{m_{12}(s)m_{23}(S)-m_{22}(s)m_{13}(S)}{(s^TS^A)_{33}}&\frac{m_{11}(s)m_{23}(S)-m_{21}(s)m_{13}(S)}{(s^TS^A)_{33}}&s_{33}\ea\right),\\
\ea
\ee
\be
\ba{ll}
S_3=\left(\ba{ccc}s_{11}&s_{12}&\frac{S_{13}}{(S^Ts^A)_{33}}\\s_{21}&s_{22}&\frac{S_{23}}{(S^Ts^A)_{33}}\\s_{31}&s_{32}&\frac{S_{33}}{(S^Ts^A)_{33}}\ea\right),&
S_4=\left(\ba{ccc}s_{11}&s_{12}&0\\s_{21}&s_{22}&0\\s_{31}&s_{32}&\frac{1}{m_{33}(s)}\ea\right).
\ea
\ee
\end{subequations}
\end{proposition}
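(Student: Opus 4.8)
The plan is to exploit the two defining relations \eqref{mu3mu2s}–\eqref{mu1mu2S} together with the relation \eqref{MnSnrel} connecting $M_n$ to $\mu_2$, evaluated at $(x,t)=(0,0)$. Evaluating \eqref{MnSnrel} at the origin and using $\mu_2(0,0,k)=\id$ (which follows from \eqref{mujdef} since $\gam_2$ is the trivial contour at the origin) gives $S_n(k)=\mu_2(0,0,k)S_n(k)$ trivially, so instead the starting point must be the column-wise decomposition of $M_n$ in terms of the $\mu_j$. Concretely, for each fixed $k\in D_n$ each column of $M_n$ coincides (by the contour prescription \eqref{gamndef}) with the corresponding column of one of $\mu_1,\mu_2,\mu_3$; reading off \eqref{sSmu3mu1} at $(x,t)=(0,0)$ one has $s(k)=\mu_3(0,0,k)$ and $S(k)=\mu_1(0,0,k)$. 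The first step is therefore to write, for each $n$, the identity
\[
M_n(0,0,k)=\big([\mu_{a_1}(0,0,k)]_1,\ [\mu_{a_2}(0,0,k)]_2,\ [\mu_{a_3}(0,0,k)]_3\big)
\]
where the indices $a_j\in\{1,2,3\}$ are dictated column-by-column by \eqref{gamndef}, and then to re-express $\mu_1$ and $\mu_3$ in terms of $\mu_2$, $s$ and $S$ via \eqref{sSdef}.

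The second step is algebraic: combining \eqref{mu3mu2s} and \eqref{mu1mu2S} gives $\mu_3=\mu_1 e^{(-ikx-4ik^3t)\hat\Lam}(S^{-1}s)$, hence at the origin $\mu_3(0,0,k)=\mu_1(0,0,k)S(k)^{-1}s(k)$, i.e. $s(k)=S(k)\,\big(S(k)^{-1}s(k)\big)$ — this just restates the global relation and is not yet what is needed. What is actually needed is to invert the triangular structure: one uses that $S_n$ must itself satisfy \eqref{MnSnrel} together with the known analyticity domains \eqref{mujbodanydom}, \eqref{mujadgbodanydom}, which force $S_n$ to have the displayed sparsity pattern (e.g. the last column of $S_1$ equals the last column of $s$, the first two entries of the last row vanish, etc.). Once the zero/trivial entries are pinned down from analyticity, the remaining nonzero entries are determined by requiring $\det$-type consistency: expanding $s=\mu_2 e^{\cdots}S_1\cdot(\text{stuff})$ and matching minors, using Cramer's rule on the $3\times3$ system, produces the cofactor expressions $m_{ij}(s)/s_{33}$, the ratios with $(s^TS^A)_{33}$, and $1/m_{33}(s)$. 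Here the key computational identities are $(s^A)^T s = \det(s)\,\id$ and the analogous cofactor identities, plus $\det s(k)\equiv1$ (from $\mathrm{tr}\,V_1=\mathrm{tr}\,V_2=0$, so the Volterra solutions have unit determinant).

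The third step is to check the four cases $n=1,2,3,4$ individually, since the contour matrices $\gam^n$ in \eqref{gamndef} differ and hence which $\mu_j$-column feeds into which column of $M_n$ changes: for $n=3,4$ the first two columns come from $s$ directly (giving the $s_{ij}$ blocks) and only the third column requires inversion (giving the $S_{i3}/(S^Ts^A)_{33}$ terms, or $1/m_{33}(s)$ for $n=4$), while for $n=1,2$ it is the reverse. In each case the computation is: substitute the column identity, use \eqref{sSdef} to trade $\mu_1,\mu_3$ for $\mu_2$ with $s,S$ cofactors, cancel the common left factor $\mu_2(0,0,k)=\id$, and read off $S_n$.

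The main obstacle I expect is bookkeeping rather than conceptual depth: correctly reading the column-to-contour assignment from \eqref{gamndef}–\eqref{gamndef} in each region $D_n$, and then carrying out the $3\times3$ cofactor/Cramer inversions without sign errors — in particular getting the adjugate-transpose combinations $(s^TS^A)_{33}$ and $(S^Ts^A)_{33}$ in the right slots, since these arise from mixed products of the two scattering matrices and their placement is exactly what distinguishes $S_2$ and $S_3$ from the simpler $S_1$ and $S_4$. Establishing $\det s=\det S=1$ and the basic adjugate identities up front will make the inversions mechanical, so the real work is the careful case analysis of the six boundary intersections $\bar D_n\cap\bar D_m$ feeding \eqref{Jmndef}, which I would organize in a single table mirroring \eqref{gamndef}.
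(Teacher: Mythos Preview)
Your central first step contains a genuine error. You assert that ``for each fixed $k\in D_n$ each column of $M_n$ coincides with the corresponding column of one of $\mu_1,\mu_2,\mu_3$.'' But the contour matrices displayed in \eqref{gamndef} show this is false: within a single column of $M_n$, different \emph{rows} use different contours. For instance, in $\gam^1$ the first column has contours $(\gam_3,\gam_3,\gam_2)$, so $[M_1]_1$ is defined by a Volterra system mixing the $\gam_3$- and $\gam_2$-normalizations; it is \emph{not} $[\mu_3]_1$, and there is no single $\mu_j$ whose first column it equals. Likewise $[M_3]_3$ mixes $\gam_1$ and $\gam_3$. This is precisely why the answer for $S_3$ has $S_{i3}/(S^Ts^A)_{33}$ in the third column rather than simply $S_{i3}$: the column is a nontrivial linear combination of the columns of (say) $\mu_2$, with coefficients that must be \emph{solved for}.

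The paper's proof supplies exactly the mechanism you are missing. Since each column of $M_n$ still solves the differential Lax pair \eqref{muLaxe}, one can write $M_n=\mu_1 e^{(-ikx-4ik^3t)\hat\Lam}R_n=\mu_2 e^{(\cdots)\hat\Lam}S_n=\mu_3 e^{(\cdots)\hat\Lam}T_n$ for $(x,t)$-independent matrices $R_n,S_n,T_n$. Evaluating at the three base points $(0,T)$, $(0,0)$, $(X_0,0)$ (after first truncating $\gam_3$ to start at a finite $X_0$ so that $\mu_3(X_0,0,k;X_0)=\id$) and reading off the contour assignments entry-by-entry from \eqref{gamndef} yields the vanishing conditions $(R_n)_{ij}=0$ when $\gam_{ij}^n=\gam_1$, $(S_n)_{ij}=0$ when $\gam_{ij}^n=\gam_2$, $(T_n)_{ij}=\dta_{ij}$ when $\gam_{ij}^n=\gam_3$. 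Combined with the two factorizations $s=S_nT_n^{-1}$ and $S=S_nR_n^{-1}$, this is an $18\times18$ algebraic system whose explicit solution gives \eqref{Sn}; one then sends $X_0\to\infty$. Your Cramer/adjugate bookkeeping is the right tool for the last algebraic step, but it cannot begin until the factorization problem with its entrywise vanishing constraints has been set up --- and that setup, not the column identification you propose, is the actual content of the proof.
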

\begin{proof}
Let $\gam_3^{X_0}$ denote the contour $(X_0,0)\rightarrow (x,t)$ in the $(x,t)-$plane, here $X_0>0$ is a constant. We introduce $\mu_3(x,t,k;X_0)$ as the solution of (\ref{mujdef}) with $j=3$ and with the contour $\gam_3$ replaced by $\gam_3^{X_0}$. Similarly, we define $M_n(x,t,k;X_0)$ as the solution of (\ref{Mndef}) with $\gam_3$ replaced by $\gam_3^{X_0}$. We will first derive expression for $S_n(k;X_0)=M_n(0,0,k;X_0)$ in terms of $S(k)$ and $s(k;X_0)=\mu_3(0,0,k;X_0)$. Then (\ref{Sn}) will follow by taking the limit $X_0\rightarrow \infty$.
\par
First, We have the following relations:
\be\label{MnRnSnTn}
\left\{
\ba{l}
M_n(x,t,k;X_0)=\mu_1(x,t,k)e^{(-ikx-4ik^3t)\hat \Lam}R_n(k;X_0),\\
M_n(x,t,k;X_0)=\mu_2(x,t,k)e^{(-ikx-4ik^3t)\hat \Lam}S_n(k;X_0),\\
M_n(x,t,k;X_0)=\mu_3(x,t,k)e^{(-ikx-4ik^3t)\hat \Lam}T_n(k;X_0).
\ea
\right.
\ee
Then we get $R_n(k;X_0)$ and $T_n(k;X_0)$ are fedined as follows:
\begin{subequations}\label{RnTnX0}
\be\label{RnX0}
R_n(k;X_0)=e^{4ik^3T\hat \Lam}M_n(0,T,k;X_0),
\ee
\be\label{TnX0}
T_n(k;X_0)=e^{ikx\hat \Lam}M_n(X_0,0,k;X_0).
\ee
\end{subequations}
The relations (\ref{MnRnSnTn}) imply that
\be\label{sSRnSnTn}
s(k;X_0)=S_n(k;X_0)T^{-1}_n(k;X_0),\qquad S(k)=S_n(k;X_0)R^{-1}_n(k;X_0).
\ee
These equations constitute a matrix factorization problem which, given $\{s,S\}$ can be solved for the $\{R_n,S_n,T_n\}$. Indeed, the integral equations (\ref{Mndef}) together with the definitions of $\{R_n,S_n,T_n\}$ imply that
\be
\left\{
\ba{lll}
(R_n(k;X_0))_{ij}=0&if&\gam_{ij}^n=\gam_1,\\
(S_n(k;X_0))_{ij}=0&if&\gam_{ij}^n=\gam_2,\\
(T_n(k;X_0))_{ij}=0&if&\gam_{ij}^n=\gam_3.
\ea
\right.
\ee
It follows that (\ref{sSRnSnTn}) are 18 scalar equations for 18 unknowns. By computing the explicit solution of this algebraic system, we find that $\{S_n(k;X_0)\}_1^4$ are given by the equation obtained from (\ref{Sn}) by replacing $\{S_n(k),s(k)\}$ with $\{S_n(k;X_0),s(k;X_0)\}$. taking $X_0\rightarrow \infty$ in this equation, we arrive at (\ref{Sn}).
\end{proof}

\subsection{The global relation}
The spectral functions $S(k)$ and $s(k)$ are not independent but satisfy an important relation. Indeed, it follows from (\ref{sSdef}) that
\be
\mu_1(x,t,k)e^{(-ikx-4ik^3t)\hat \Lam}S^{-1}(k)s(k)=\mu_3(x,t,k),\quad k\in(D_3\cup D_4,D_3\cup D_4,D_1\cup D_2).
\ee
Since $\mu_1(0,T,k)=\id$, evaluation at $(0,T)$ yields the following global relation:
\be\label{globalrel}
S^{-1}(k)s(k)=e^{4ik^3T\hat \Lam}c(T,k),\quad k\in(D_3\cup D_4,D_3\cup D_4,D_1\cup D_2).
\ee
where $c(T,k)=\mu_3(0,T,k)$.

\subsection{The residue conditions}
Since $\mu_2$ is an entire function, it follows from (\ref{MnSnrel}) that M can only have sigularities at the points where the $S_n's$ have singularities. We infer from the explicit formulas (\ref{Sn}) that the possible singularities of $M$ are as follows:
\begin{itemize}
\item $[M]_1$ could have poles in $D_1\cup D_2$ at the zeros of $s_{33}(k)$;
\item $[M]_1$ could have poles in $D_2$ at the zeros of $(s^TS^A)_{33}(k)$;
\item $[M]_2$ could have poles in $D_1\cup D_2$ at the zeros of $s_{33}(k)$;
\item $[M]_2$ could have poles in $D_2$ at the zeros of $(s^TS^A)_{33}(k)$;
\item $[M]_3$ could have poles in $D_3$ at the zeros of $(S^Ts^A)_{33}(k)$;
\item $[M]_3$ could have poles in $D_3\cup D_4$ at the zeros of $m_{33}(s)(k)$;
\end{itemize}
We denote the above possible zeros by $\{k_j\}_1^N$ and assume they satisfy the following assumption.
\begin{assumption}
We assume that
\begin{itemize}
\item $s_{33}(k)$ has $n_0$ possible simple zeros in $D_1$ denoted by $\{k_j\}_1^{n_0}$;
\item $s_{33}(k)$ has $n_1-n_0$ possible simple zeros in $D_2$ denoted by $\{k_j\}_{n_0+1}^{n_1}$;
\item $(s^TS^A)_{33}(k)$ has $n_2-n_1$ possible simple zeros in $D_2$ denoted by $\{k_j\}_{n_1+1}^{n_2}$;
\item $(S^Ts^A)_{33}(k)$ has $n_3-n_2$ possible simple zeros in $D_3$ denoted by $\{k_j\}_{n_2+1}^{n_3}$;
\item $m_{33}(s)(k)$ has $n_4-n_3$ possible simple zeros in $D_3$ denoted by $\{k_j\}_{n_3+1}^{n_4}$;
\item $m_{33}(s)(k)$ has $n_5-n_4$ possible simple zeros in $D_3$ denoted by $\{k_j\}_{n_4+1}^{n_5}$;
\item $m_{33}(s)(k)$ has $N-n_5$ possible simple zeros in $D_4$ denoted by $\{k_j\}_{n_5+1}^{N}$;
\end{itemize}
and that none of these zeros coincide. Moreover, we assume that none of these functions have zeros on the boundaries of the $D_n$'s.
\end{assumption}
We determine the residue conditions at these zeros in the following:
\begin{proposition}\label{propos}
Let $\{M_n\}_1^4$ be the eigenfunctions defined by (\ref{Mndef}) and assume that the set $\{k_j\}_1^N$ of singularitues are as the above assumption. Then the following residue conditions hold:
\begin{subequations}
\be\label{M11D1res}
{Res}_{k=k_j}[M]_1=\frac{m_{12}(s)(k_j)}{\dot s_{33}(k_j)s_{23}(k_j)}e^{\tha_{31}(k_j)}[M(k_j)]_3,\quad 1\le j\le n_0,k_j\in D_1
\ee
\be\label{M12D1res}
{Res}_{k=k_j}[M]_2=\frac{m_{12}(s)(k_j)}{\dot s_{33}(k_j)s_{13}(k_j)}e^{\tha_{32}(k_j)}[M(k_j)]_3,\quad 1\le j\le n_0,k_j\in D_1
\ee
\be\label{M21D2res}
\ba{r}
Res_{k=k_j}[M]_1=\frac{m_{12}(s)(k_j)m_{33}(S)(k_j)-m_{32}(s)(k_j)m_{13}(S)(k_j)}{\dot{(s^TS^A)_{33}(k_j)}s_{23}(k_j)}e^{\tha_{31}(k_j)}[M(k_j)]_3\\
\quad n_1+1\le j\le n_2,k_j\in D_2,
\ea
\ee
\be\label{M22D2res}
\ba{r}
Res_{k=k_j}[M]_2=\frac{m_{21}(s)(k_j)m_{33}(S)(k_j)-m_{31}(s)(k_j)m_{23}(S)(k_j)}{\dot{(s^TS^A)_{33}(k_j)}s_{13}(k_j)}e^{\tha_{32}(k_j)}[M(k_j)]_3\\
\quad n_1+1\le j\le n_2,k_j\in D_2,
\ea
\ee
\be\label{M43D4res}
\ba{rl}
Res_{k=k_j}[M]_3=&\frac{S_{13}(k_j)s_{32}(k_j)-S_{33}(k_j)s_{12}(k_j)}{\dot{(S^Ts^A)_{33}(k_j)}m_{23}(s)(k_j)}e^{\tha_{13}(k_j)}[M(k_j)]_1\\
&+\frac{S_{33}(k_j)s_{11}(k_j)-S_{13}(k_j)s_{31}(k_j)}{\dot{(S^Ts^A)_{33}(k_j)}m_{23}(s)(k_j)}e^{\tha_{23}(k_j)}[M(k_j)]_2,n_2+1\le j\le n_3,k_j\in D_3,
\ea
\ee
\be\label{M63D6res}
\ba{r}
Res_{k=k_j}[M]_3=\frac{s_{12}(k_j)}{\dot m_{33}(s)(k_j)m_{23}(s)(k_j)}e^{\tha_{13}(k_j)}[M(k_j)]_1-\frac{s_{11}(k_j)}{\dot m_{33}(s)(k_j)m_{23}(s)(k_j)}e^{\tha_{23}(k_j)}[M(k_j)]_2\\
\quad n_4+1\le j\le N,k_j\in D_4.
\ea
\ee
\end{subequations}
where $\dot f=\frac{df}{dk}$, and $\tha_{ij}$ is defined by
\be\label{thaijdef}
\tha_{ij}(x,t,k)=(l_i-l_j)x+(z_i-z_j)t,\quad i,j=1,2,3.
\ee
that implies that
\[
\tha_{ij}=0,i,j=1,2;\quad \tha_{13}=\tha_{23}=-\tha_{32}=-\tha_{31}=-2ikx-8ik^3t.
\]
\end{proposition}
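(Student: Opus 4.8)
The plan is to read off the pole structure of $M$ column by column from the three multiplicative representations \eqref{MnSnrel}, \eqref{mu3mu2s}, \eqref{mu1mu2S} together with the explicit formulas \eqref{Sn}. Writing \eqref{MnSnrel} columnwise,
\be\label{propcolumns}
[M_n]_j=\sum_{q=1}^{3}e^{\tha_{qj}}(S_n)_{qj}[\mu_2]_q ,
\ee
and likewise for $\mu_3,\mu_1$ with $(S_n)_{qj}$ replaced by $s_{qj}$ and $S_{qj}$, one sees that the residue of a given column of $M$ at a singularity $k_j$ is controlled entirely by the single scalar denominator occurring in the corresponding entries of $S_n$. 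A preliminary remark I would record first is that $\det\mu_j\equiv 1$: the matrices $V_1,V_2$ are traceless, the commutators in \eqref{muLaxe} are traceless, and $\mu_j\to\id$ along $\gam_j$, so $\det\mu_j$ is $(x,t)$-independent and equal to $1$. Hence $\mu_j^A$ is the genuine cofactor matrix of $\mu_j$, the spectral matrices satisfy the cofactor relations $s\,(s^A)^T=(s^A)^T s=\id$ and $S\,(S^A)^T=(S^A)^T S=\id$, and the numerators appearing in \eqref{Sn} are just entries of the adjugated eigenfunctions evaluated at $(0,0)$. By \eqref{mujadgbodanydom} and the enlarged-domain remarks at $x=0$, each such numerator is analytic on the relevant $D_n$; therefore in each $D_n$ the matrix $S_n$ is meromorphic with only simple poles, located at the simple zeros listed before the Assumption.

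For \eqref{M11D1res}--\eqref{M22D2res} the singular columns are $[M]_1$ and $[M]_2$. Take $k_j\in D_1$ a zero of $s_{33}$ as the model case: since $\tha_{11}=\tha_{21}=0$, \eqref{propcolumns} gives $[M]_1=s_{33}^{-1}(m_{22}(s)[\mu_2]_1+m_{12}(s)[\mu_2]_2)$, hence $\mathrm{Res}_{k_j}[M]_1=\dot s_{33}(k_j)^{-1}(m_{22}(s)(k_j)[\mu_2(k_j)]_1+m_{12}(s)(k_j)[\mu_2(k_j)]_2)$. Evaluating the third column of \eqref{propcolumns} at $k_j$, the $[\mu_2]_3$ term drops out because its coefficient is $s_{33}(k_j)=0$, leaving $[M(k_j)]_3=e^{-\tha_{31}(k_j)}(s_{13}(k_j)[\mu_2(k_j)]_1+s_{23}(k_j)[\mu_2(k_j)]_2)$. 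The cofactor identities valid at $s_{33}=0$, namely $m_{22}(s)=-s_{13}s_{31}$ and $m_{12}(s)=-s_{23}s_{31}$, display the residue numerator as $-s_{31}(k_j)$ times exactly the combination just found, which is \eqref{M11D1res}. Formula \eqref{M12D1res} comes from the second column in the same way, and \eqref{M21D2res}, \eqref{M22D2res} are obtained identically in $D_2$, using the row of \eqref{propcolumns} that carries the denominator $(s^TS^A)_{33}$ and the cofactor identities at its zeros. In all these cases the reason only $[M]_3$ appears on the right is that the vanishing of the relevant determinant at $k_j$ forces a rank-one degeneration among the three coefficient vectors.

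For \eqref{M43D4res}, \eqref{M63D6res} the singular column is $[M]_3$ in $D_3$, respectively $D_4$, and now all three columns of $\mu_2$ enter the regular columns $[M]_1,[M]_2$. Having extracted $\mathrm{Res}_{k_j}[M]_3$ as above — in $D_3$ it is a multiple of $e^{\tha_{13}}(S_{13}[\mu_2]_1+S_{23}[\mu_2]_2)+S_{33}[\mu_2]_3$ evaluated at $k_j$ and divided by the $k$-derivative of $(S^Ts^A)_{33}$ at $k_j$, and in $D_4$ it is simply $[\mu_2(k_j)]_3/\dot m_{33}(s)(k_j)$ — I would solve for that $\mu_2$-combination in terms of $[M]_1,[M]_2$. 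Writing those two columns via \eqref{propcolumns} yields an over-determined $3\times2$ linear system for the coefficients, whose $3\times3$ augmented determinant equals $e^{\tha_{13}(k_j)}(S^Ts^A)_{33}(k_j)$ in the $D_3$ case and $m_{33}(s)(k_j)$ in the $D_4$ case, by cofactor expansion along the third column; it therefore vanishes at $k_j$, so the system is consistent, and Cramer's rule applied to its first and third rows gives the coefficients in \eqref{M43D4res}, \eqref{M63D6res}. One then substitutes the explicit values $\tha_{11}=\tha_{12}=\tha_{21}=\tha_{22}=0$ and $\tha_{13}=\tha_{23}=-\tha_{31}=-\tha_{32}=-2ikx-8ik^3t$ from \eqref{thaijdef}. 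In writing this up I would carry out one representative case of each type in full and assert the rest by the same computation.

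The content is elementary; the only real work is the bookkeeping — matching each zero with the right column and denominator, checking the particular cofactor/degeneration identity for that zero, and tracking the signs and the exponential factors $e^{\tha_{ij}}$. There is no analytic obstacle: simplicity of the poles and solvability of the ``regular-column'' systems follow at once from $\det\mu_j\equiv 1$, the structure of \eqref{Sn}, the analyticity domains \eqref{mujbodanydom}, \eqref{mujadgbodanydom}, and the Assumption that the zeros are simple and lie off the boundaries of the $D_n$.
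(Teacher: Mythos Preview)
Your proposal is correct and follows essentially the same route as the paper: both start from the columnwise expansion $[M_n]_j=\sum_q e^{\tha_{qj}}(S_n)_{qj}[\mu_2]_q$ together with the explicit formulas \eqref{Sn}, and then eliminate the $\mu_2$-columns between the singular and the regular columns of $M_n$ before taking the residue. The only cosmetic difference is that the paper performs the elimination by solving one of the regular-column equations for a $[\mu_2]_q$ and substituting, whereas you invoke the cofactor/rank-degeneration identities at the relevant zero directly; these are two organizations of the same linear-algebra computation.
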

\begin{proof}
We will prove (\ref{M11D1res}), (\ref{M21D2res}), (\ref{M43D4res}), (\ref{M63D6res}), the other conditions follow by similar arguments.
Equation (\ref{MnSnrel}) implies the relation
\begin{subequations}
\be\label{M1S1}
M_1=\mu_2e^{(-ikx-4ik^3t)\hat\Lam}S_1,
\ee
\be\label{M2S2}
M_2=\mu_2e^{(-ikx-4ik^3t)\hat\Lam}S_2.
\ee
\be\label{M4S4}
M_3=\mu_2e^{(-ikx-4ik^3t)\hat\Lam}S_3,
\ee
\be\label{M6S6}
M_4=\mu_2e^{(-ikx-4ik^3t)\hat\Lam}S_4,
\ee
\end{subequations}
In view of the expressions for $S_1$ and $S_2$ given in (\ref{Sn}), the three columns of (\ref{M1S1}) read:
\begin{subequations}
\be\label{M11}
[M_1]_1=[\mu_2]_1\frac{m_{22}(s)}{s_{33}}+[\mu_2]_2e^{\tha_{21}}\frac{m_{12}(s)}{s_{33}},
\ee
\be\label{M12}
[M_1]_2=[\mu_2]_1e^{\tha_{12}}\frac{m_{21}(s)}{s_{33}}+[\mu_2]_2\frac{m_{11}(s)}{s_{33}},
\ee
\be\label{M13}
[M_1]_3=[\mu_2]_1e^{\tha_{13}}s_{13}+[\mu_2]_2e^{\tha_{23}}s_{23}+[\mu_2]_3s_{33}.
\ee
\end{subequations}
while the three columns of (\ref{M2S2}) read:
\begin{subequations}
\be\label{M21}
\ba{rl}
[M_2]_1&=[\mu_2]_1\frac{m_{22}(s)m_{33}(S)-m_{32}(s)m_{23}(S)}{(s^TS^A)_{33}}\\
&+[\mu_2]_2\frac{m_{12}(s)m_{33}(S)-m_{32}(s)m_{13}(S)}{(s^TS^A)_{33}}e^{\tha_{21}}\\
&+[\mu_2]_3\frac{m_{12}(s)m_{23}(S)-m_{22}(s)m_{13}(S)}{(s^TS^A)_{33}}e^{\tha_{31}}
\ea
\ee
\be\label{M22}
\ba{rl}
[M_2]_2&=[\mu_2]_1\frac{m_{21}(s)m_{33}(S)-m_{31}(s)m_{23}(S)}{(s^TS^A)_{33}}e^{\tha_{12}}\\
&+[\mu_2]_2\frac{m_{11}(s)m_{33}(S)-m_{31}(s)m_{13}(S)}{(s^TS^A)_{33}}\\
&+[\mu_2]_3\frac{m_{11}(s)m_{23}(S)-m_{21}(s)m_{13}(S)}{(s^TS^A)_{33}}e^{\tha_{32}}
\ea
\ee
\be\label{M23}
[M_2]_3=[\mu_2]_1s_{13}e^{\tha_{13}}+[\mu_2]_2s_{23}e^{\tha_{23}}+[\mu_2]_3s_{33}.
\ee
\end{subequations}
and the three columns of (\ref{M4S4}) read:
\begin{subequations}
\be\label{M41}
[M_3]_1=[\mu_2]_1s_{11}+[\mu_2]_2s_{21}e^{\tha_{21}}+[\mu_2]_3s_{31}e^{\tha_{31}},
\ee
\be\label{M42}
[M_3]_2=[\mu_2]_1s_{12}e^{\tha_{12}}+[\mu_2]_2s_{22}+[\mu_2]_3s_{32}e^{\tha_{32}},
\ee
\be\label{M43}
[M_3]_3=[\mu_2]_1\frac{S_{13}}{(S^Ts^A)_{33}}e^{\tha_{13}}+[\mu_2]_2\frac{S_{23}}{(S^Ts^A)_{33}}e^{\tha_{23}}+[\mu_2]_3\frac{S_{33}}{(S^Ts^A)_{33}}.
\ee
\end{subequations}
the three columns of (\ref{M6S6}) read:
\begin{subequations}
\be\label{M61}
[M_4]_1=[\mu_2]_1s_{11}+[\mu_2]_2s_{21}e^{\tha_{21}}+[\mu_2]_3s_{31}e^{\tha_{31}},
\ee
\be\label{M62}
[M_4]_2=[\mu_2]_1s_{12}e^{\tha_{12}}+[\mu_2]_2s_{22}+[\mu_2]_3s_{32}e^{\tha_{32}},
\ee
\be\label{M63}
[M_4]_3=[\mu_2]_3\frac{1}{m_{33}(s)}.
\ee
\end{subequations}
We first suppose that $k_j\in D_1$ is a simple zero of $s_{33}(k)$. Solving (\ref{M13}) for $[\mu_2]_2$ and substituting the result in to (\ref{M11}), we find
\[
[M_1]_1=\frac{m_{12}(s)}{s_{33}s_{23}}e^{\tha_{31}}[M_1]_3+\frac{m_{32}(s)}{s_{23}}[\mu_2]_2-\frac{m_{12}(s)}{s_{23}}e^{\tha_{31}}[\mu_2]_3.
\]
Taking the residue of this equation at $k_j$, we find the condition (\ref{M11D1res}) in the case when $k_j\in D_1$. Similarly, Solving (\ref{M23}) for $[\mu_2]_2$ and substituting the result in to (\ref{M21}), we find
\[
[M_2]_1=\frac{m_{12}(s)m_{33}(S)-m_{32}(s)m_{13}(S)}{(s^TS^A)_{33}s_{23}}e^{\tha_{31}}[M_1]_3-\frac{m_{32}(s)}{s_{23}}[\mu_2]_1-\frac{m_{12}(s)}{s_{23}}e^{\tha_{31}}[\mu_2]_3.
\]
Taking the residue of this equation at $k_j$, we find the condition (\ref{M21D2res}) in the case when $k_j\in D_2$.
\par
In order to prove (\ref{M43D4res}), we solve (\ref{M41}) and (\ref{M42}) for $[\mu_2]_1$ and $[\mu_2]_3$, then substituting the result into (\ref{M43}), we find
\[
[M_3]_3=\frac{S_{13}s_{32}-S_{33}s_{12}}{(S^Ts^A)_{33}m_{23}(s)}e^{\tha_{13}}[M_3]_1+\frac{S_{33}s_{11}-S_{13}s_{31}}{(S^Ts^A)_{33}(k_j)m_{23}(s)}e^{\tha_{23}}[M_3]_2+\frac{1}{m_{23}(s)}[\mu_2]_3.
\]
Taking the residue of this equation at $k_j$, we find the condition (\ref{M43D4res}) in the case when $k_j\in D_3$. Similarly, solving (\ref{M61}) and (\ref{M62}) for $[\mu_2]_1$ and $[\mu_2]_3$, then substituting the result into (\ref{M63}), we find
\[
[M_4]_3=\frac{s_{12}}{m_{33}(s)m_{23}(s)}e^{\tha_{13}}[M_4]_1-\frac{s_{11}}{m_{33}(s)m_{23}(s)}e^{\tha_{13}}[M_4]_2-\frac{1}{m_{23}(s)}e^{\tha_{23}}[\mu_2]_2.
\]
Taking the residue of this equation at $k_j$, we find the condition (\ref{M63D6res}) in the case when $k_j\in D_4$.
\end{proof}

\section{The Riemann-Hilbert problem}

The sectionally analytic function $M(x,t,k)$ defined in section 2 satisfies a Riemann-Hilbert problem which can be formulated in terms of the initial and boundary values of $u(x,t)$. By solving this Riemann-Hilbert problem, the solution of (\ref{KdV-typee})(then (\ref{SSe})) can be recovered for all values of $x,t$.
\begin{theorem}
Suppose that $u(x,t)$ is a solution of (\ref{KdV-typee}) in the half-line domain $\Om$ with sufficient smoothness and decays as $x\rightarrow \infty$. Then $u(x,t)$ can be reconstructed from the initial value $\{u_0(x)\}$ and boundary values $\{g_0(t),g_1(t),g_2(t)\}$ defined as follows,
\be\label{inibouvalu}
u_0(x)=u(x,0),\quad g_0(t)=u(0,t),\quad g_1(t)=u_x(0,t),\quad g_2(t)=u_{xx}(0,t).
\ee
\par
Use the initial and boundary data to define the jump matrices $J_{m,n}(x,t,k)$ as well as the spectral $s(k)$ and $S(k)$ by equation (\ref{sSdef}). Assume that the possible zeros $\{k_j\}_1^N$ of the functions $s_{33}(k),(s^TS^A)_{33}(k),(S^Ts^A)_{33}(k)$ and $m_{33}(s)(k)$ are as in assumption 2.3.
\par
Then the solution $\{u(x,t)\}$ is given by
\be\label{usolRHP}
u(x,t)=2i\lim_{k\rightarrow \infty}(kM(x,t,k))_{13}.
\ee
where $M(x,t,k)$ satisfies the following $3\times 3$ matrix Riemann-Hilbert problem:
\begin{itemize}
\item $M$ is sectionally meromorphic on the Riemann $k-$sphere with jumps across the contours $\bar D_n\cap \bar D_m,n,m=1,\cdots, 4$, see Figure 2.
\item Across the contours $\bar D_n\cap \bar D_m$, $M$ satisfies the jump condition
      \be\label{MRHP}
      M_n(x,t,k)=M_m(x,t,k)J_{m,n}(x,t,k),\quad k\in \bar D_n\cap \bar D_m,n,m=1,2,3,4.
      \ee
\item $M(x,t,k)=\id+O(\frac{1}{k}),\qquad k\rightarrow \infty$.
\item The residue condition of $M$ is showed in Proposition \ref{propos}.
\end{itemize}
\end{theorem}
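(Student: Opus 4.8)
The plan is to show that the sectionally analytic function $M$ assembled from the eigenfunctions of Section 2 solves the stated Riemann--Hilbert problem, that this problem has a unique solution, and that $u$ is recovered from the $1/k$ coefficient of $M$ at $k=\infty$. Since the data $\{u_0,g_0,g_1,g_2\}$ is extracted from an actual solution $u$, it automatically satisfies the global relation (\ref{globalrel}), so the RHP built from it is consistent.

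First I would verify that $M$ is a solution of the RHP by collecting results already in hand. Given the smooth decaying $u$, define $\mu_j$, $M_n$, and the spectral functions $s(k)$, $S(k)$ as in Section 2. Proposition 2.1 shows each $M_n$ is bounded and analytic on $D_n$ away from the zeros of Assumption 2.3 and satisfies $M_n=\id+O(1/k)$, which gives the normalization condition. Using (\ref{MnSnrel}), $M_n=\mu_2 e^{(-ikx-4ik^3t)\hat\Lam}S_n$, together with the same identity for $M_m$ and the cancellation of the common factor $\mu_2$, one obtains $M_n=M_m J_{m,n}$ with $J_{m,n}=e^{(-ikx-4ik^3t)\hat\Lam}(S_m^{-1}S_n)$ as in (\ref{Jmndef}); since Proposition 2.2 writes each $S_n$ explicitly in terms of $s$ and $S$, the jump matrices are determined by the spectral data, hence by the initial and boundary values through (\ref{sSdef}). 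Finally the meromorphic structure of $M$ is precisely the content of Proposition \ref{propos}. Thus $M$ satisfies all four items of the RHP.

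Next I would prove uniqueness. Because $\det J_{m,n}=1$ on every contour and $\det M\to1$ at $\infty$, while the residue relations of Proposition \ref{propos} are column relations that introduce no poles in $\det M$, we get $\det M\equiv1$, so every solution is invertible. If $\tilde M$ is a second solution, then $\Phi:=\tilde M M^{-1}$ has no jump across the contours; after multiplying by the rational factors that cancel the prescribed poles one obtains an entire function, bounded and equal to $\id$ at $\infty$, hence $\Phi\equiv\id$ by Liouville. Turning this into a rigorous argument requires a vanishing lemma of Zhou type for the $3\times3$ problem: one uses the symmetry of the jump matrix inherited from the $u,\bar u$ structure of $U$ and $V$ (a Schwarz reflection $k\mapsto\bar k$ relating the $D_n$'s) so that the associated $L^2$ singular integral equation has trivial kernel, and Assumption 2.3 (simple zeros, none on the boundaries of the $D_n$) guarantees the pole-clearing step is harmless. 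This vanishing lemma is the step I expect to be the main obstacle, since for $3\times3$ Lax pairs it is not automatic and must be extracted from the specific algebraic structure of the Sasa--Satsuma pair.

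Finally I would derive the reconstruction formula. Insert $M=\id+M^{(1)}(x,t)/k+O(1/k^2)$ into the $x$-part of (\ref{muLaxe}), namely $M_x+[ik\Lam,M]=V_1M$, and compare the $O(1)$ terms: since $[ik\Lam,M]=i[\Lam,M^{(1)}]+O(1/k)$ and $V_1M=V_1+O(1/k)$, this gives $i[\Lam,M^{(1)}]=V_1$. Taking the $(1,3)$ entry and using $\Lam=\mathrm{diag}(1,1,-1)$ yields $2i\,M^{(1)}_{13}=(V_1)_{13}=u$, that is $u(x,t)=2i\lim_{k\to\infty}(kM(x,t,k))_{13}$, which is (\ref{usolRHP}). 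Consistency with the $t$-part of (\ref{muLaxe}) follows from the same expansion, and the required smoothness of $M$ in $(x,t)$ is inherited from that of the $\mu_j$'s; together with uniqueness this shows $u$ is completely determined by the data through the RHP.
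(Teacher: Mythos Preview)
Your argument is correct and its core matches the paper's: you collect the analyticity, normalization, jump, and residue properties from Section~2 (Propositions 2.1, 2.2, and \ref{propos}) to verify that $M$ solves the stated RHP, and you derive (\ref{usolRHP}) by inserting the expansion $M=\id+M^{(1)}/k+O(1/k^2)$ into the $x$-part of the Lax pair and reading off the $(1,3)$ entry of $i[\Lam,M^{(1)}]=V_1$. This is exactly what the paper does; its proof is a single sentence pointing to the large-$k$ asymptotics in Appendix~\ref{apdix1}, since everything else was already established in Section~2.

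Where you go beyond the paper is in your second step, the uniqueness/vanishing-lemma discussion. The paper does not address uniqueness of the RHP at all: its theorem is a direct statement (given a solution $u$, the associated $M$ satisfies this RHP and $u$ is recovered by (\ref{usolRHP})), not an inverse theorem asserting that the RHP has a unique solution from which $u$ can be reconstructed. Your Liouville-type sketch and your honest flagging of the $3\times3$ vanishing lemma as the main obstacle are reasonable additions, but they are not needed to reproduce the paper's proof, and the paper makes no attempt to supply them.
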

\begin{proof}
It only remains to prove (\ref{usolRHP}) and this equation follows from the large $k$ asymptotics of the eigenfunctions, see the appendix \ref{apdix1}.
\end{proof}

\section{Non-linearizable Boundary Conditions}
A major difficulty of initial-boundary value problems is that some of the boundary values are unkown for a well-posed problem. All boundary values are needed for the definition of $S(k)$, and hence for the formulation of the Riemann-Hilbert problem. Our main result expresses the spectral function $S(k)$ in terms of the prescribed boundary data and the initial data via the solution of a system of nonlinear integral equations.

\subsection{Asymptotics}
An analysis of (\ref{muLaxe}) shows that the eigenfunctions $\{\mu_j\}_1^3$ have the following asymptotics as $k\rightarrow\infty$ (see the appendix \ref{apdix1}):
\begin{subequations}
\be\label{mujasykinf}
\ba{l}
\mu_j(x,t,k)=\id+\frac{1}{k}\left(\ba{lll}\frac{i}{2} \int_{(x_j,t_j)}^{(x,t)}\Dta&\frac{i}{2}\int_{(x_j,t_j)}^{(x,t)}\Dta_1&\frac{1}{2i}u\\\frac{i}{2} \int_{(x_j,t_j)}^{(x,t)}\Dta_2&\frac{i}{2} \int_{(x_j,t_j)}^{(x,t)}\Dta&\frac{1}{2i}\bar u\\\frac{1}{2i}\bar u&\frac{1}{2i}u&-i \int_{(x_j,t_j)}^{(x,t)}\Dta \ea\right)\\
+\frac{1}{k^2}\left(\ba{lll}-\frac{1}{4}\int_{(x_j,t_j)}^{(x,t)}(\eta+\nu_1)&-\frac{1}{4}\int_{(x_j,t_j)}^{(x,t)}\eta_1&\mu^{(2)}_{13}\\
-\frac{1}{4}\int_{(x_j,t_j)}^{(x,t)}\eta_2&-\frac{1}{4}\int_{(x_j,t_j)}^{(x,t)}(\eta+\nu_2)&\mu^{(2)}_{23}\\
\mu^{(2)}_{31}&\mu^{(2)}_{32}&\int_{(x_j,t_j)}^{(x,t)}\eta_3\\\ea\right)\\
+\frac{1}{k^3}\left(\ba{lll}\mu^{(3)}_{11}&\mu^{(3)}_{12}&\mu^{(3)}_{13}\\\mu^{(3)}_{21}&\mu^{(3)}_{22}&\mu^{(3)}_{23}\\\mu^{(3)}_{31}&\mu^{(3)}_{32}&\mu^{(3)}_{33}\ea\right)+O(\frac{1}{k^4})
\ea
\ee
\end{subequations}
where
\begin{subequations}
\be\label{Dtadef}
\ba{l}
\Dta=-|u|^2dx+(u\bar u_{xx}+u_{xx}\bar u-u_x\bar u_x+6|u|^4)dt\\
\Dta_1=-u^2dx+(uu_{xx}+u_{xx}u-(u_x)^2+6|u|^2u^2)dt\\
\Dta_2=-\bar u^2dx+(\bar u\bar u_{xx}+\bar u_{xx}\bar u-(\bar u_x)^2+6|u|^2\bar u^2)dt
\ea
\ee
\be
\ba{l}
\mu^{(2)}_{13}=-\frac{1}{2}u\int_{(x_j,t_j)}^{(x,t)}\Dta+\frac{1}{4}u_x\\
\mu^{(2)}_{23}=-\frac{1}{2}\bar u\int_{(x_j,t_j)}^{(x,t)}\Dta+\frac{1}{4}\bar u_x\\
\mu^{(2)}_{31}=\frac{1}{4}(\bar u\int_{(x_j,t_j)}^{(x,t)}\Dta+u\int_{(x_j,t_j)}^{(x,t)}\Dta_2)-\frac{1}{4}\bar u_x \\
\mu^{(2)}_{32}=\frac{1}{4}(\bar u\int_{(x_j,t_j)}^{(x,t)}\Dta_1+u\int_{(x_j,t_j)}^{(x,t)}\Dta)-\frac{1}{4}u_x.
\ea
\ee
\be
\ba{l}
\eta=d[\frac{1}{2}(\int_{(x_j,t_j)}^{(x,t)}\Dta)^2]\\
\nu_1=\Dta_1\int_{(x_j,t_j)}^{(x,t)}\Dta_2+(u\bar u_x)dx+(u\bar u_t-2|u|^2(u\bar u_x-u_x\bar u)-(u_{xx}\bar u_x-u_x\bar u_{xx}))dt\\
\eta_1=\int_{(x_j,t_j)}^{(x,t)}\Dta\int_{(x_j,t_j)}^{(x,t)}\Dta_1+u^2\\
\eta_2=\int_{(x_j,t_j)}^{(x,t)}\Dta\int_{(x_j,t_j)}^{(x,t)}\Dta_2+\bar u^2\\
\nu_2=\Dta_2\int_{(x_j,t_j)}^{(x,t)}\Dta_1+(\bar u u_x)dx+(\bar u u_t-2|u|^2(\bar u u_x-\bar u_x u)-(\bar u_{xx}u_x-\bar u_xu_{xx}))dt,\\
\eta_3=d[-\frac{1}{2}(\int_{(x_j,t_j)}^{(x,t)}\Dta)^2-\frac{1}{4}|u|^2].
\ea
\ee
and in the following we just use $\mu^{(3)}_{13},\mu^{(3)}_{23},\mu^{(3)}_{31}$ and $\mu^{(3)}_{32}$, so we only compute these functions
\be
\ba{l}
\mu^{(3)}_{13}=\frac{1}{2i}u\mu^{(2)}_{33}+\frac{1}{4}u_x\mu^{(1)}_{33}+\frac{i}{4}|u|^2u+\frac{i}{8}u_{xx}\\
\mu^{(3)}_{23}=\frac{1}{2i}\bar u\mu^{(2)}_{33}+\frac{1}{4}\bar u_x\mu^{(1)}_{33}+\frac{i}{4}|u|^2\bar u+\frac{i}{8}\bar u_{xx}\\
\mu^{(3)}_{31}=\frac{1}{2i}(\bar u\mu^{(2)_{11}}+u\mu^{(2)}_{21})-\frac{1}{4}(\bar u_x\mu^{(1)_{11}}+u_x\mu^{(1)}_{21})+\frac{i}{4}|u|^2\bar u+\frac{i}{8}\bar u_{xx}\\
\mu^{(3)}_{32}=\frac{1}{2i}(\bar u\mu^{(2)_{12}}+u\mu^{(2)}_{22})-\frac{1}{4}(\bar u_x\mu^{(1)_{12}}+u_x\mu^{(1)}_{22})+\frac{i}{4}|u|^2 u+\frac{i}{8}u_{xx}
\ea
\ee
\end{subequations}
From the global relation (\ref{globalrel})and replacing $T$ by $t$, we find
\be
\mu_2(0,t,k)e^{-4ik^3t\hat \Lam}s(k)=c(t,k),\quad k\in(D_3\cup D_4,D_3\cup D_4,D_1\cup D_2).
\ee
We define functions $\{\Phi_{13}(t,k),\Phi_{23}(t,k),\Phi_{33}(t,k)\}$ and $\{c_j(t,k)\}_1^3$ by:
\be
\mu_2(0,t,k)=\left(\ba{lll}\Phi_{11}(t,k)&\Phi_{12}(t,k)&\Phi_{13}(t,k)\\\Phi_{21}(t,k)&\Phi_{22}(t,k)&\Phi_{23}(t,k)\\\Phi_{31}(t,k)&\Phi_{32}(t,k)&\Phi_{33}(t,k)\ea\right),\quad \frac{[c(t,k)]_3}{s_{33}(k)}=\left(\ba{l}c_1(t,k)\\c_2(t,k)\\c_3(t,k)\ea\right).
\ee
we can write the $(13)$ and $(23)$ entries of the global relation as
\begin{subequations}\label{globalrelj3}
\be\label{globalrel13}
\Phi_{11}(t,k)e^{-8ik^3t}\frac{s_{13}}{s_{33}}+\Phi_{12}(t,k)e^{-8ik^3t}\frac{s_{23}}{s_{33}}+\Phi_{13}(t,k)=c_1(t,k),\quad k\in D_1\cup D_2,
\ee
\be\label{globalrel23}
\Phi_{21}(t,k)e^{-8ik^3t}\frac{s_{13}}{s_{33}}+\Phi_{22}(t,k)e^{-8ik^3t}\frac{s_{23}}{s_{33}}+\Phi_{23}(t,k)=c_2(t,k),\quad k\in D_1\cup D_2,
\ee
\end{subequations}
The functions $\{c_j(t,k)\}_1^3$ are analytic and bounded in $D_1\cup D_2$ away from the possible zeros of $s_{33}(k)$ and of order $O(\frac{1}{k})$ as $k\rightarrow \infty$.
\par
From the asymptotic of $\mu_j(x,t,k)$ in (\ref{mujasykinf}) we have
\be\label{sk3}
\left(\ba{l}s_{13}(k)\\s_{23}(k)\\s_{33}(k)\ea\right)=\left(\ba{l}0\\0\\1\ea\right)+\frac{1}{2ik}\left(\ba{l}u(0,0)\\\bar u(0,0)\\2\int_{(\infty,0)}^{(0,0)}\Dta\ea\right)+O(\frac{1}{k^2}).
\ee
and
\begin{subequations}\label{Phi3}
\be\label{Phi13}
\Phi_{j3}(t,k)=\frac{\Phi_{j3}^{(1)}(t)}{k}+\frac{\Phi_{j3}^{(2)}(t)}{k^2}+\frac{\Phi_{j3}^{(3)}(t)}{k^3}+O(\frac{1}{k^4}),
\ee
\be\label{Phi33}
\Phi_{33}(t,k)=1+\frac{\Phi_{33}^{(1)}(t)}{k}+\frac{\Phi_{33}^{(2)}(t)}{k^2}+O(\frac{1}{k^3}),\quad k\rightarrow \infty,k\in D_1\cup D_2.
\ee
where
\[
\ba{ll}
\Phi_{j3}^{(1)}(t)=\frac{1}{2i}g_0(t)^T,&\Phi_{j3}^{(2)}(t)=\frac{1}{4}g_1(t)^T-\frac{1}{2}g_0^T\int_{(0,0)}^{(x,t)}\Dta\\
\Phi_{j3}^{(3)}(t)=\frac{1}{2i}g_0^T\Phi^{(2)}_{33}+\frac{1}{4}g_1^T\Phi^{(1)}_{33}+\frac{i}{4}|u|^2g_0^T+\frac{i}{8}g_2^T,&\\
\Phi_{33}^{(1)}(t)=-i\int_{(0,0)}^{(x,t)}\Dta,&\Phi_{33}^{(2)}(t)=\int_{(x_j,t_j)}^{(x,t)}\eta_3.
\ea
\]
Here the definition of $\Phi_{j3}(t,k)$ can be found in the appendix \ref{apdix1}.
\end{subequations}
\par
In particular, we find the following expressions for the boudary values:
\begin{subequations}\label{g012}
\be\label{g0}
g_0^T=2i\Phi_{j3}^{(1)}(t),
\ee
\be\label{g1}
g_1^T=2ig_0^T\Phi_{33}^{(1)}(t)+4\Phi_{j3}^{(2)}(t),
\ee
\be\label{g2}
g_2^T=-2|g_0|^2g_0^T+2ig_1^T\Phi_{33}^{(1)}(t)+4g_0^T\Phi_{33}^{(2)}(t)-8i\Phi_{j3}^{(3)}(t).
\ee
\end{subequations}
We will also need the asymptotic of $c_j(t,k)$,
\begin{lemma}
The global relation (\ref{globalrelj3}) implies that the large $k$ behavior of $c_j(t,k)$ satisfies
\be\label{cjlargek}
c_j(t,k)=\frac{\Phi_{j3}^{(1)}(t)}{k}+\frac{\Phi_{j3}^{(2)}(t)}{k}+\frac{\Phi_{j3}^{(3)}(t)}{k}+O(\frac{1}{k^4}),\quad k\rightarrow \infty,k\in D_1.
\ee
\end{lemma}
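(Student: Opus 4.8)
The plan is to extract the asymptotic behavior of $c_j(t,k)$ directly from the global relation~\eqref{globalrelj3}, using the known asymptotics of the other quantities appearing in it. Recall that $c_j(t,k) = [c(t,k)]_j / s_{33}(k)$ enters \eqref{globalrel13}--\eqref{globalrel23} on the right-hand side, while the left-hand side is a combination of the entries $\Phi_{ij}(t,k)$ of $\mu_2(0,t,k)$ together with the ratios $s_{13}/s_{33}$, $s_{23}/s_{33}$ multiplied by the exponential $e^{-8ik^3t}$. So the first step is simply to solve \eqref{globalrel13} for $c_1(t,k)$ and \eqref{globalrel23} for $c_2(t,k)$, writing
\[
c_j(t,k) = \Phi_{j3}(t,k) + \bigl(\Phi_{j1}(t,k)\tfrac{s_{13}}{s_{33}} + \Phi_{j2}(t,k)\tfrac{s_{23}}{s_{33}}\bigr)e^{-8ik^3t}, \qquad j=1,2,
\]
valid for $k\in D_1\cup D_2$ away from the zeros of $s_{33}$.

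The second step is to control the exponential term. For $k\in D_1$ one has $\re(8ik^3t)\ge 0$ (this is exactly the half-plane on which the first two columns of $\mu_1(0,t,k)$ — hence the relevant boundedness of $S(k)$ — are defined; it is encoded in the contour assignment \eqref{gamijndef} and the enlarged domains recorded after \eqref{mujbodanydom}), so $e^{-8ik^3t}$ is bounded and in fact decays as $k\to\infty$ inside $D_1$. Meanwhile $\Phi_{j1},\Phi_{j2}$ are bounded by the analyticity statement for $\mu_2(0,t,k)$, and $s_{13}/s_{33},\,s_{23}/s_{33} = O(1/k)$ by \eqref{sk3}. Hence the entire bracketed term is $O(1/k)\cdot e^{-8ik^3t}$, which is $o(k^{-m})$ for every $m$ as $k\to\infty$ in $D_1$; in particular it does not contribute to any term of the asymptotic expansion up to and including $O(k^{-3})$, and can be absorbed into the $O(k^{-4})$ error. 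This is the step that forces the restriction $k\in D_1$ (rather than $D_1\cup D_2$) in the statement: in $D_2$ the sign of $\re(8ik^3t)$ is reversed and the exponential blows up.

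The third step is then immediate: $c_j(t,k)$ agrees with $\Phi_{j3}(t,k)$ up to $O(1/k^4)$, so its expansion is read off from \eqref{Phi13}, giving
\[
c_j(t,k) = \frac{\Phi_{j3}^{(1)}(t)}{k} + \frac{\Phi_{j3}^{(2)}(t)}{k^2} + \frac{\Phi_{j3}^{(3)}(t)}{k^3} + O\!\left(\frac{1}{k^4}\right), \qquad k\to\infty,\ k\in D_1,
\]
which is \eqref{cjlargek} (the displayed powers of $k$ in the paper's statement are evidently typos for $k^2$ and $k^3$). The only genuine obstacle is the justification in the second step that $e^{-8ik^3t}$ times a bounded quantity is beyond-all-orders small uniformly as $k\to\infty$ within the \emph{closed} region $\bar D_1$ — one must check that the relevant boundary rays of $D_1$ are precisely the lines $\re(k^3)=0$, so that on all of $\bar D_1$ one still has $\re(8ik^3 t)\ge 0$ and the estimate degrades at worst to $O(1/k)$ on those rays; this is where the explicit description of the sets $D_n$ via $\re z_i$ and the hypothesis (Assumption~2.3) that $s_{33}$ has no zeros on $\partial D_n$ are used. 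Everything else is bookkeeping with the expansions already established.
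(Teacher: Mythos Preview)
Your argument contains a sign error that invalidates the core step. In $D_1$ one has $\re z_1>\re z_3$, i.e.\ $\re(-8ik^3)>0$, so $|e^{-8ik^3t}|=e^{\re(-8ik^3)t}$ \emph{grows} as $k\to\infty$ in $D_1$ (try $k=e^{i\pi/6}$: then $-8ik^3t=8t$). Hence the bracketed term $(\Phi_{j1}\tfrac{s_{13}}{s_{33}}+\Phi_{j2}\tfrac{s_{23}}{s_{33}})e^{-8ik^3t}$ is \emph{not} beyond-all-orders small in $D_1$; it is exponentially large. Correspondingly, the asymptotic expansion~\eqref{Phi13} for $\Phi_{j3}$ is not valid in $D_1$: the third column of $\mu_2(0,t,k)$ is bounded only in $D_2\cup D_4$ (see the enlarged domains recorded after~\eqref{mujbodanydom}), and in $D_1$ the function $\Phi_{j3}$ itself carries an exponentially growing $e^{-8ik^3t}$ contribution that the formal power-series ansatz of Appendix~\ref{apdix1} does not see.

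The paper's proof (Appendix~\ref{apdix2}) handles exactly this. It posits for the third column of $\mu_2(0,t,k)$ a two-scale ansatz
\[
\begin{pmatrix}\Phi_{j3}\\\Phi_{33}\end{pmatrix}=\Bigl(\alpha_0+\frac{\alpha_1}{k}+\cdots\Bigr)+\Bigl(\beta_0+\frac{\beta_1}{k}+\cdots\Bigr)e^{-8ik^3t},
\]
determines the $\alpha_l,\beta_l$ from the $t$-part of the Lax pair together with the initial condition at $t=0$, and does the same for the first block column with an $e^{8ik^3t}$ correction. Substituting both expansions and the asymptotics~\eqref{sk3} of $s_{j3}/s_{33}$ into the global relation, the $e^{-8ik^3t}$ contributions from $\Phi_{j3}$ and from $\Phi_{2\times 2}\cdot\tfrac{s_{j3}}{s_{33}}e^{-8ik^3t}$ cancel order by order (e.g.\ at order $k^{-1}$ one has $-\Phi_{j3}^{(1)}(0)$ from the former against $\tfrac{1}{2i}u(0,0)=\Phi_{j3}^{(1)}(0)$ from the latter), leaving the clean expansion~\eqref{cjlargek}. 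Your approach would work in $D_2$, where the exponential does decay --- but the lemma is stated in $D_1$, and there the cancellation mechanism, not decay, is what makes it true.
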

\begin{proof}
See the appendix \ref{apdix2}.
\end{proof}

\subsection{The Dirichlet and Neumann problems}
We can now derive effective characterizations of spectral function $S(k)$ for the Dirichlet ($g_0$ prescribed), the first Neumann ($g_1$ prescribed), and the second Neumann ($g_2$ prescribed) problems.
\par
Define $\alpha$ by $\alpha=e^{\frac{2\pi i}{3}}$ and let $\{\Pi_j(t,k),\hat \Pi_j(t,k),\tilde \Pi_j(t,k)\}_1^3$ denote the following combinations formed from $\{\Phi_{j3}(t,k)\}_1^3$:
\be
\ba{l}
\Pi_j(t,k)=\Phi_{j3}(t,k)+\alpha\Phi_{j3}(t,\alpha k)+\alpha^2\Phi_{j3}(t,\alpha^2k),\quad j=1,2,3,\\
\hat \Pi_j(t,k)=\Phi_{j3}(t,k)+\alpha^2\Phi_{j3}(t,\alpha k)+\alpha\Phi_{j3}(t,\alpha^2k),\quad j=1,2,3,\\
\tilde \Pi_j(t,k)=\Phi_{j3}(t,k)+\Phi_{j3}(t,\alpha k)+\Phi_{j3}(t,\alpha^2k),\quad j=1,2,3.
\ea
\ee
And let $R(k)=\Phi_{11}\frac{s_{13}}{s_{33}}+\Phi_{12}\frac{s_{23}}{s_{33}}$.
\par
Let $D_1=D_1^{'}\cup D_1^{''}$ where $D_1^{'}=D_1\cap \{\re k>0\}$ and $D_1^{''}=D_1\cap \{\re k<0\}$. Similarly, let $D_4=D_4^{'}\cup D_4^{''}$ where $D_4^{'}=D_4\cap \{\re k>0\}$ and $D_4^{''}=D_1\cap \{\re k<0\}$.
\begin{theorem}
Let $T<\infty$. Let $u_0(x),u\ge 0$, be a function of Schwartz class.
\par
For the Dirichlet problem it is assumed that the function $g_0(t),0\le t<T$, has sufficient smoothness and is compatible with $u_0(x)$ at $x=t=0$.
\par
For the first Neumann problem it is assumed that the function $g_1(t),0\le t<T$, has sufficient smoothness and is compatible with $u_0(x)$ at $x=t=0$.
\par
Similarly, for the second Neumann problem it is assumed that the function $g_2(t),0\le t<T$, has sufficient smoothness and is compatible with $u_0(x)$ at $x=t=0$.
\par
Suppose that $s_{33}(k)$ has a finite number of simple zeros in $D_1$.
\par
Then the spectral function $S(k)$ is given by
\be\label{Sk}
S(k)=\left(\ba{ccc}A(k)&B(k)&e^{8ik^3T}C(k)\\D(k)&E(k)&e^{8ik^3T}F(k)\\e^{-8ik^3T}G(k)&e^{-8ik^3T}H(k)&I(k)\ea\right)
\ee
where
\[
\ba{ll}
A(k)=\Phi_{22}(k)\Phi_{33}(k)-\Phi_{23}(k)\Phi_{32}(k)&B(k)=\Phi_{13}(k)\Phi_{22}(k)-\Phi_{12}(k)\Phi_{33}(k)\\
C(k)=\Phi_{12}(k)\Phi_{23}(k)-\Phi_{13}(k)\Phi_{22}(k)&D(k)=\Phi_{23}(k)\Phi_{31}(k)-\Phi_{21}(k)\Phi_{33}(k)\\
E(k)=\Phi_{11}(k)\Phi_{33}(k)-\Phi_{13}(k)\Phi_{31}(k)&F(k)=\Phi_{21}(k)\Phi_{13}(k)-\Phi_{11}(k)\Phi_{23}(k)\\
G(k)=\Phi_{21}(k)\Phi_{32}(k)-\Phi_{22}(k)\Phi_{31}(k)&H(k)=\Phi_{12}(k)\Phi_{31}(k)-\Phi_{11}(k)\Phi_{32}(k)\\
I(k)=\Phi_{11}(k)\Phi_{22}(k)-\Phi_{12}(k)\Phi_{21}(k)&
\ea
\]
and the complex-value functions $\{\Phi_{l3}(t,k)\}_{l=1}^{3}$ satisfy the following system of integral equations:
\begin{subequations}\label{Phil3sys}
\be\label{Phi13sys}
\ba{rl}
\Phi_{13}(t,k)&=\int_0^te^{-8ik^3(t-t')}\left[(2ik|g_0|^2+(g_0\bar g_1-g_1\bar g_0))\Phi_{13}\right.\\
&\left.+g_0^2\Phi_{23}+(4k^2g_0+2ikg_1-4|g_0|^2g_0-g_2)\Phi_{33}\right](t',k)dt'
\ea
\ee
\be\label{Phi23sys}
\ba{rl}
\Phi_{23}(t,k)&=\int_0^te^{-8ik^3(t-t')}\left[(2ik|g_0|^2-(g_0\bar g_1-g_1\bar g_0))\Phi_{13}\right.\\
&\left.+\bar g_0^2\Phi_{23}+(4k^2\bar g_0+2ik\bar g_1-4|g_0|^2\bar g_0-\bar g_2)\Phi_{33}\right](t',k)dt'
\ea
\ee
\be\label{Phi33sys}
\ba{rl}
\Phi_{33}(t,k)&=1+\int_0^t\left[(-4k^2\bar g_0+2ik\bar g_1+4|g_0|^2+\bar g_2)\Phi_{13}\right.\\&\left.+(-4k^2g_0+2ikg_1+4|g_0|^2+g_2)\Phi_{23}+-4ik|g_0|^2\Phi_{33}\right](t',k)dt'
\ea
\ee
\end{subequations}
and $\{\Phi_{l1}(t,k)\}_{l=1}^{3},\{\Phi_{l2}(t,k)\}_{l=1}^{3}$ satisfy the following system of integral equations:
\begin{subequations}\label{Phil1sys}
\be\label{Phi11sys}
\ba{rl}
\Phi_{11}(t,k)&=1+\int_0^t\left[(2ik|g_0|^2+(g_0\bar g_1-g_1\bar g_0))\Phi_{11}\right.\\
&\left.+g_0^2\Phi_{21}+(4k^2g_0+2ikg_1-4|g_0|^2g_0-g_2)\Phi_{31}\right](t',k)dt'
\ea
\ee
\be\label{Phi21sys}
\ba{rl}
\Phi_{21}(t,k)&=\int_0^t\left[(2ik|g_0|^2-(g_0\bar g_1-g_1\bar g_0))\Phi_{11}\right.\\
&\left.+\bar g_0^2\Phi_{21}+(4k^2\bar g_0+2ik\bar g_1-4|g_0|^2\bar g_0-\bar g_2)\Phi_{31}\right](t',k)dt'
\ea
\ee
\be\label{Phi33sys}
\ba{rl}
\Phi_{33}(t,k)&=\int_0^te^{8ik^3(t-t')}\left[(-4k^2\bar g_0+2ik\bar g_1+4|g_0|^2+\bar g_2)\Phi_{11}\right.\\&\left.+(-4k^2g_0+2ikg_1+4|g_0|^2+g_2)\Phi_{21}+-4ik|g_0|^2\Phi_{31}\right](t',k)dt'
\ea
\ee
\end{subequations}
\begin{subequations}\label{Phil2sys}
\be\label{Phi12sys}
\ba{rl}
\Phi_{12}(t,k)&=\int_0^t\left[(2ik|g_0|^2+(g_0\bar g_1-g_1\bar g_0))\Phi_{12}\right.\\
&\left.+g_0^2\Phi_{22}+(4k^2g_0+2ikg_1-4|g_0|^2g_0-g_2)\Phi_{32}\right](t',k)dt'
\ea
\ee
\be\label{Phi22sys}
\ba{rl}
\Phi_{22}(t,k)&=1+\int_0^t\left[(2ik|g_0|^2-(g_0\bar g_1-g_1\bar g_0))\Phi_{12}\right.\\
&\left.+\bar g_0^2\Phi_{22}+(4k^2\bar g_0+2ik\bar g_1-4|g_0|^2\bar g_0-\bar g_2)\Phi_{32}\right](t',k)dt'
\ea
\ee
\be\label{Phi32sys}
\ba{rl}
\Phi_{32}(t,k)&=\int_0^te^{8ik^3(t-t')}\left[(-4k^2\bar g_0+2ik\bar g_1+4|g_0|^2+\bar g_2)\Phi_{12}\right.\\&\left.+(-4k^2g_0+2ikg_1+4|g_0|^2+g_2)\Phi_{22}+-4ik|g_0|^2\Phi_{32}\right](t',k)dt'
\ea
\ee
\end{subequations}
\begin{enumerate}
\item For the Dirichlet problem, the unknown Neumann boundary values $g_1(t)$ and $g_2(t)$ are given by
\begin{subequations}\label{DtoN}
\be\label{DtoNg1}
\ba{rl}
g_1(t)=&\frac{2g_0(t)}{\pi}\int_{\pt D_3}\Pi_3(t,k)dk+\frac{2}{\pi i}\int_{\pt D_3}\left[k\Pi_1(t,k)-\frac{3g_0(t)}{2i}\right]dk\\
&-\frac{2}{\pi i}\int_{\pt D_3}ke^{-8ik^3t}[(\alpha^2-\alpha)R(\alpha k)+(\alpha-\alpha^2)R(\alpha^2k)]dk\\
&+4\left\{(1-\alpha^2)\sum_{k_j\in D_1^{'}}+(1-\alpha)\sum_{k_j\in D_1^{''}}\right\}k_je^{-8ik_j^3t}Res_{k_j}R(k).
\ea
\ee
and
\be\label{DtoNg2}
\ba{rl}
g_2(t)=&g_0(t)^3-\frac{4}{\pi}\int_{\pt D_3}\left[k^2\Pi_1(t,k)-\frac{3kg_0(t)}{2i}\right]dk\\
&+\frac{4}{\pi}\int_{\pt D_3}k^2e^{-8ik^3t}\left[(1-\alpha)R(\alpha k)+(1-\alpha^2)R(\alpha^2k)\right]dk\\
&-8i\left\{(1-\alpha)\sum_{k_j\in D_1^{'}}+(1-\alpha^2)\sum_{k_j\in D_1^{''}}\right\}k_j^2e^{-8ik_j^3t}Res_{k_j}R(k)\\
&+\frac{4g_0(t)}{\pi i}\int_{\pt D_3}k\hat \Pi_3(t,k)dk+\frac{2g_1(t)}{\pi}\int_{\pt D_3}\Pi_3(t,k)dk.
\ea
\ee
\end{subequations}
\item For the first Neumann problem, the unknown boundary values $g_0(t)$ and $g_2(t)$ are given by
\begin{subequations}\label{FNtoD}
\be\label{FNtoDg0}
\ba{rl}
g_0(t)=&\frac{1}{\pi}\int_{\pt D_3}\hat \Pi_1(t,k)dk-\frac{1}{\pi}\int_{\pt D_3}e^{-8ik^3t}\left[(\alpha-alpha^2)R(\alpha k)+(\alpha^2-\alpha)R(\alpha^2k)\right]dk\\
&+2i\left\{(1-\alpha)\sum_{k_j\in D_1^{'}}+(1-\alpha^2)\sum_{k_j\in D_1^{''}}\right\}e^{-8ik_j^3t}Res_{k_j}R(k),
\ea
\ee
and
\be\label{FNtoDg2}
\ba{rl}
g_2(t)=&g_0^3(t)-\frac{4}{\pi}\int_{\pt D_3}\left(k^2\hat \Pi_1(t,k)-\frac{3}{\pi i}\int_{\pt D_3}l\hat \Pi_1(t,l)dl\right)dk\\
&+\frac{4}{\pi}\int_{\pt D_3}k^2e^{-8ik^3t}\left[(1-\alpha^2)R(\alpha k)+(1-\alpha)R(\alpha^2k)\right]dk\\
&-8i\left\{(1-\alpha)\sum_{k_j\in D_1^{'}}+(1-\alpha^2)\sum_{k_j\in D_1^{''}}\right\}k_j^2e^{-8ik_j^3t}Res_{k_j}R(k)\\
&+\frac{4g_0(t)}{\pi i}\int_{\pt D_3}k\hat \Pi_3(t,k)dk+\frac{2g_1(t)}{\pi}\int_{\pt D_3}\Pi_3(t,k)dk.
\ea
\ee
\end{subequations}
\item For the second Neumann problem, the unknown boundary values $g_0(t)$ and $g_1(t)$ are given by
\begin{subequations}\label{SNtoD}
\be\label{SNtoDg0}
\ba{rl}
g_0(t)=&\frac{1}{\pi}\int_{\pt D_3}\hat \Pi_1(t,k)dk-\frac{1}{\pi}\int_{\pt D_3}e^{-8ik^3t}\left[(\alpha-alpha^2)R(\alpha k)+(\alpha^2-\alpha)R(\alpha^2k)\right]dk\\
&+2i\left\{(1-\alpha)\sum_{k_j\in D_1^{'}}+(1-\alpha^2)\sum_{k_j\in D_1^{''}}\right\}e^{-8ik_j^3t}Res_{k_j}R(k),
\ea
\ee
and
\be\label{SNtoDg1}
\ba{rl}
g_1(t)=&\frac{2g_0(t)}{\pi}\int_{\pt D_3}\Pi_3(t,k)dk+\frac{2}{\pi i}\int_{\pt D_3}k\tilde \Pi_1(t,k)dk\\
&-\frac{2}{\pi i}\int_{\pt D_3}ke^{-8ik^3t}\left[(\alpha^2-1)R(\alpha k)+(\alpha-1)R(\alpha^2k)\right]dk\\
&+4\left\{(1-\alpha)\sum_{k_j\in D_1^{'}}+(1-\alpha^2)\sum_{k_j\in D_1^{''}}\right\}k_je^{-8ik_j^3t}Res_{k_j}R(k).
\ea
\ee
\end{subequations}
\end{enumerate}
\end{theorem}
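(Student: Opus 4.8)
The plan is to derive the representation (\ref{Sk}) of $S(k)$ from the boundary value of the eigenfunction $\mu_2$ at $x=0$, read off the large-$k$ asymptotics of its third column, and then invert the global relation (\ref{globalrelj3}) by exploiting the $\Z_3$-symmetry $k\mapsto\al k$ of the dispersion $k^3$.

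\textbf{Step 1 (the matrix $S(k)$).} Set $\Phi(t,k)=\mu_2(0,t,k)$. By (\ref{muLaxe}) this is the solution of the $t$-part of the Lax pair at $x=0$ with $\Phi(0,k)=\id$, and $\det\Phi\equiv1$ by the trace identity. Since the contour $\gam_1$ issues from $(0,T)$ we have $\mu_1(0,T,k)=\id$, so evaluating (\ref{mu1mu2S}) at $(x,t)=(0,T)$ gives $S(k)=e^{4ik^3T\hat\Lam}\Phi(T,k)^{-1}$; expanding $\Phi^{-1}$ by cofactors (using $\det\Phi=1$) yields exactly (\ref{Sk}), the entries $A,\dots,I$ being the signed $2\times2$ minors of $\Phi$ and the factors $e^{\pm8ik^3T}$ coming from conjugation by $e^{4ik^3T\hat\Lam}$. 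Writing the $t$-part of (\ref{muLaxe}) at $x=0$ column by column with $V_2(0,t,k)=k^2V_2^{(2)}+kV_2^{(1)}+V_2^{(0)}$ and $u(0,t)=g_0$, $u_x(0,t)=g_1$, $u_{xx}(0,t)=g_2$ turns the ODE into the Volterra systems (\ref{Phil3sys})--(\ref{Phil2sys}); the exponential $e^{\mp8ik^3(t-t')}$ in the various rows is forced by the diagonal of $\hat\Lam$. This part is a direct computation.

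\textbf{Step 2 (asymptotics).} Substituting the expansions (\ref{Phi3}) into these Volterra systems and matching powers of $k$ produces the coefficients $\Phi_{j3}^{(n)}(t)$ and $\Phi_{33}^{(n)}(t)$, hence the identities (\ref{g012}) expressing $g_0,g_1,g_2$ through $\Phi_{j3}^{(1)},\Phi_{j3}^{(2)},\Phi_{j3}^{(3)}$ and $\Phi_{33}^{(1)},\Phi_{33}^{(2)}$. Thus it only remains to express these few asymptotic coefficients in terms of the prescribed data.

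\textbf{Step 3 (inverting the global relation) --- the main obstacle.} Written as in (\ref{globalrelj3}), the global relation reads $\Phi_{j3}(t,k)=c_j(t,k)-e^{-8ik^3t}R_j(t,k)$ for $j=1,2$, where $R_j=\Phi_{j1}\tfrac{s_{13}}{s_{33}}+\Phi_{j2}\tfrac{s_{23}}{s_{33}}$ depends only on the $\Phi$-entries and the (known) initial data, while $c_j$ is analytic in $D_1\cup D_2$, of order $O(1/k)$, with poles only at the zeros $\{k_j\}$ of $s_{33}$ in $D_1$. Since $(\al k)^3=k^3$, the factor $e^{-8ik^3t}$ is invariant under $k\mapsto\al k$, so the combinations $\Pi_j,\hat\Pi_j,\tilde\Pi_j$ project the $1/k$-expansion onto the three residue classes of the exponent modulo $3$ and yield identities of the form
\[
\Pi_j(t,k)=\big[c_j(t,k)+\al c_j(t,\al k)+\al^2c_j(t,\al^2k)\big]-e^{-8ik^3t}\big[R_j(t,k)+\al R_j(t,\al k)+\al^2R_j(t,\al^2k)\big],
\]
and likewise for $\hat\Pi_j,\tilde\Pi_j$. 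On $\pt D_3$ the rotated points $\al k,\al^2k$ lie in $D_1\cup D_2$, so two of the three $c_j$-copies extend analytically across $\pt D_3$; I would integrate such identities over $\pt D_3$, deform each $c_j$-copy into its region of analyticity, and use the $O(1/k)$ decay together with $|e^{-8ik^3t}|=1$ on $\pt D_3$ to control the arcs at infinity, so that the analytic copies either vanish or collapse, by the residue theorem, to sums over the images of the zeros of $s_{33}$ in $D_1'$ and $D_1''$ (these are the $\mathrm{Res}_{k_j}R(k)$ terms, with weights $(1-\al)$ or $(1-\al^2)$ according to the sub-sector), while the non-decaying $e^{-8ik^3t}R_j$-copies survive as the explicit $\pt D_3$-integrals of $R(\al k)$ and $R(\al^2k)$. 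Matching the resulting expansions against (\ref{g012}), and using the diagonal entry $\Phi_{33}$ and $R(k)=\Phi_{11}\tfrac{s_{13}}{s_{33}}+\Phi_{12}\tfrac{s_{23}}{s_{33}}$, gives (\ref{DtoN}) for the Dirichlet problem; the first- and second-Neumann maps (\ref{FNtoD}) and (\ref{SNtoD}) follow the same way, the only change being which asymptotic coefficient one solves for and hence whether $\Pi_j$, $\hat\Pi_j$ or $\tilde\Pi_j$ is used. The genuinely delicate points are keeping track of which rotated copy of $c_j$ is analytic in which of $D_1$, $D_2$, justifying the contour deformations and the absence of contributions from infinity, and collecting the residues with their correct $\al$-weights; the remaining bookkeeping --- expanding the minors $A,\dots,I$ and matching numerical constants --- is routine.
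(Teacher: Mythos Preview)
Your proposal is correct and follows essentially the same route as the paper. Step~1 matches the paper's one-line derivation $S(k)=e^{4ik^3T\hat\Lam}\mu_2(0,T,k)^{-1}$ (the paper writes this as $e^{8ik^3T}\mu_2^A(0,T,k)^T$, which is the same thing once $\det\Phi=1$ is used), and Step~3 is exactly the paper's mechanism: represent $\Phi_{33}^{(1)}$, $\Phi_{13}^{(2)}$, $\Phi_{33}^{(2)}$, $\Phi_{13}^{(3)}$ via Cauchy integrals over $\pt D_1\cup\pt D_3$, use the rotations $k\mapsto\al k,\al^2k$ to rewrite the $\pt D_1$ piece as a $\pt D_3$ integral of a $\Pi$-combination plus a remainder $I(t)$, and then evaluate $I(t)$ through the global relation---the analytic $c_j$-copies collapse to residues at the zeros of $s_{33}$ in $D_1'$ and $D_1''$, while the $R$-terms survive as the explicit $\pt D_3$ integrals of $R(\al k)$ and $R(\al^2k)$; the Neumann cases differ only in which coefficient one isolates and hence which of $\Pi_j,\hat\Pi_j,\tilde\Pi_j$ appears.
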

\begin{proof}
The representations (\ref{Sk}) follow from the relation $S(k)=e^{8ik^3T}\mu_2^A(0,T,k)^T$. And the system (\ref{Phil3sys}) is the direct result of the Volteral integral equations of $\mu_2(0,t,k)$.
\begin{enumerate}
\item In order to derive (\ref{DtoNg1}) we note that equation (\ref{g1}) expresses $g_1$ in terms of $\Phi_{33}^{(1)}$ and $\Phi_{13}^{(2)}$. Furthermore, equation (\ref{Phi3}) and Cauchy theorem imply
    \[
    -\frac{2\pi i}{3}\Phi_{33}^{(1)}(t)=2\int_{\pt D_2}[\Phi_{33}(t,k)-1]dk=\int_{\pt D_4}[\Phi_{33}(t,k)-1]dk
    \]
    and
    \[
    -\frac{2\pi i}{3}\Phi_{13}^{(2)}(t)=2\int_{\pt D_2}\left[k\Phi_{13}(t)-\frac{g_0(t)}{2i}\right]dk=\int_{\pt D_4}\left[k\Phi_{13}(t)-\frac{g_0(t)}{2i}\right]dk.
    \]
    Thus,
    \be\label{Phi331}
    \ba{l}
    i\pi\Phi_{33}^{(1)}(t)=-\left(\int_{\pt D_2}+\int_{\pt D_4}\right)[\Phi_{33}(t,k)-1]dk=\left(\int_{\pt D_1}+\int_{\pt D_3}\right)[\Phi_{33}(t,k)-1]dk\\
    =\int_{\pt D_3}[\Phi_{33}(t,k)-1]dk+\alpha\int_{\pt D_3}[\Phi_{33}(t,k)-1]dk+\alpha^2\int_{\pt D_3}[\Phi_{33}(t,k)-1]dk\\
    =\int_{\pt D_3}\Pi_{3}(t,k)dk.
    \ea
    \ee
    Similarly,
    \be\label{Phi132}
    \ba{l}
    i\pi \Phi_{13}^{(2)}(t)=\left(\int_{\pt D_3}+\int_{\pt D_1}\right)\left[k\Phi_{13}(t)-\frac{g_0(t)}{2i}\right]dk\\
    =\left(\int_{\pt D_3}+\alpha^2\int_{\pt D_1^{'}}+\alpha\int_{\pt D_1^{''}}\right)\left[k\Phi_{13}(t)-\frac{g_0(t)}{2i}\right]dk+I(t)\\
    =\int_{\pt D_3}\left[k\Pi_{1}(t,k)-\frac{3g_0(t)}{2i}\right]dk+I(t).
    \ea
    \ee
    where $I(t)$ is defined by
    \[
    I(t)=\left((1-\alpha^2)\int_{\pt D_1^{'}}+(1-\alpha)\int_{\pt D_1^{''}}\right)\left[k\Phi_{13}(t)-\frac{g_0(t)}{2i}\right]dk
    \]
    The last step involves using the global relation to compute $I(t)$
    \be\label{DtoNIt}
    \ba{r}
    I(t)=\left((1-\alpha^2)\int_{\pt D_1^{'}}+(1-\alpha)\int_{\pt D_1^{''}}\right)\left[kc_1(t,k)-\frac{g_0(t)}{2i}\right]dk\\
    -\left((1-\alpha^2)\int_{\pt D_1^{'}}+(1-\alpha)\int_{\pt D_1^{''}}\right)ke^{-8ik^3t}R(k)dk
    \ea
    \ee
    Using the asymptotic (\ref{cjlargek}) and Cauchy theorem to compute the first term on the right-hand side of equation (\ref{DtoNIt}) and using the transformation $k\rightarrow\alpha k$ and $k\rightarrow \alpha^2k$ in the second term on the right-hand side of (\ref{DtoNIt}), we find
    \be\label{DtoNItres}
    \ba{r}
    I(t)=-i\pi \Phi_{13}^{(2)}(t)-\int_{\pt D_3}ke^{-8ik^3t}\left[(\alpha^2-\alpha)R(\alpha k)+(\alpha-\alpha^2)R(\alpha^2k)\right]dk\\
    +2\pi i\left\{(1-\alpha^2)\sum_{k_j\in D_1^{'}}+(1-\alpha)\sum_{k_j\in D_1^{''}}\right\}ke^{-8ik_j^3t}Res_{k_j}R(k).
    \ea
    \ee
    Equations (\ref{Phi132}) and (\ref{DtoNItres}) imply
    \[
    \ba{l}
    \Phi_{13}^{(2)}(t)=\frac{1}{2\pi i}\int_{\pt D_3}\left[k\Pi_{1}(t,k)-\frac{3g_0(t)}{2i}\right]dk\\
    -\frac{1}{2\pi i}\int_{\pt D_3}ke^{-8ik^3t}\left[(\alpha^2-\alpha)R(\alpha k)+(\alpha-\alpha^2)R(\alpha^2k)\right]dk\\
    \left\{(1-\alpha^2)\sum_{k_j\in D_1^{'}}+(1-\alpha)\sum_{k_j\in D_1^{''}}\right\}k_je^{-8ik_j^3t}Res_{k_j}R(k).
    \ea
    \]
    This equation together with (\ref{g1}) and (\ref{Phi331}) yields (\ref{DtoNg1}).
    \par
    In order to derive (\ref{DtoNg2}), we note that (\ref{g2}) expresses $g_2$ in terms of $\Phi_{13}^{(3)}$, $\Phi_{33}^{(2)}$ and $\Phi_{33}^{(1)}$. Equation (\ref{DtoNg2}) follows from the expression (\ref{Phi331}) for $\Phi_{33}^{(1)}$ and the following formulas:
    \begin{subequations}
    \be\label{Phi332}
    \Phi_{33}^{(2)}(t)=\frac{1}{\pi i}\int_{\pt D_3}k\hat \Pi_3dk,
    \ee
    \be\label{Phi133}
    \ba{l}
    \Phi_{13}^{(3)}(t)=\frac{1}{2\pi i}\int_{\pt D_3}\left[k^2\Pi_{1}(t,k)-\frac{3kg_0(t)}{2i}\right]dk\\
    -\frac{1}{2\pi i}\int_{\pt D_3}k^2e^{-8ik^3t}\left[(1-\alpha)R(\alpha k)+(1-\alpha^2)R(\alpha^2k)\right]dk\\
    \left\{(1-\alpha)\sum_{k_j\in D_1^{'}}+(1-\alpha^2)\sum_{k_j\in D_1^{''}}\right\}k_j^2e^{-8ik_j^3t}Res_{k_j}R(k).
    \ea
    \ee
    \end{subequations}
    \item In order to derive the representations (\ref{FNtoD}) relevant for the first Neumann problem, we use (\ref{g012}) together with (\ref{Phi331}), (\ref{Phi332}) and the following formulas:
        \begin{subequations}
        \be\label{Phi131}
        \ba{l}
        \Phi_{13}^{(1)}(t)=\frac{1}{2\pi i}\int_{\pt D_3}\hat \Pi_{1}(t,k)dk\\
         -\frac{1}{2\pi i}\int_{\pt D_3}e^{-8ik^3t}\left[(\alpha-\alpha^2)R(\alpha k)+(\alpha^2-\alpha)R(\alpha^2k)\right]dk\\
        \left\{(1-\alpha)\sum_{k_j\in D_1^{'}}+(1-\alpha^2)\sum_{k_j\in D_1^{''}}\right\}e^{-8ik_j^3t}Res_{k_j}R(k).
        \ea
        \ee
        \be\label{Phi132FN}
        \Phi_{13}^{(2)}(t)=\frac{1}{\pi i}\int_{\pt D_3}k\hat \Pi_1dk,
        \ee
        \be\label{Phi133FN}
        \ba{l}
        \Phi_{13}^{(3)}(t)=\frac{1}{2\pi i}\int_{\pt D_3}\left[k^2\hat \Pi_{1}(t,k)-3\Phi_{13}^{(2)}\right]dk\\
         -\frac{1}{2\pi i}\int_{\pt D_3}k^2e^{-8ik^3t}\left[(1-\alpha^2)R(\alpha k)+(1-\alpha)R(\alpha^2k)\right]dk\\
        \left\{(1-\alpha^2)\sum_{k_j\in D_1^{'}}+(1-\alpha)\sum_{k_j\in D_1^{''}}\right\}k_j^2e^{-8ik_j^3t}Res_{k_j}R(k).
        \ea
        \ee
        \end{subequations}
    \item In order to derive the representations (\ref{SNtoD}) relevant for the second Neumann problem, we use (\ref{g012}) together with (\ref{Phi331}) and the following formulas:
        \begin{subequations}
        \be\label{Phi131SN}
        \ba{l}
        \Phi_{13}^{(1)}(t)=\frac{1}{2\pi i}\int_{\pt D_3}\tilde \Pi_{1}(t,k)dk\\
         -\frac{1}{2\pi i}\int_{\pt D_3}e^{-8ik^3t}\left[(\alpha-1)R(\alpha k)+(\alpha^2-1)R(\alpha^2k)\right]dk\\
        \left\{(1-\alpha^2)\sum_{k_j\in D_1^{'}}+(1-\alpha)\sum_{k_j\in D_1^{''}}\right\}e^{-8ik_j^3t}Res_{k_j}R(k).
        \ea
        \ee
        \be\label{Phi132SN}
        \ba{l}
        \Phi_{13}^{(2)}(t)=\frac{1}{2\pi i}\int_{\pt D_3}k\tilde \Pi_{1}(t,k)dk\\
         -\frac{1}{2\pi i}\int_{\pt D_3}ke^{-8ik^3t}\left[(\alpha^2-1)R(\alpha k)+(\alpha-1)R(\alpha^2k)\right]dk\\
        \left\{(1-\alpha)\sum_{k_j\in D_1^{'}}+(1-\alpha^2)\sum_{k_j\in D_1^{''}}\right\}e^{-8ik_j^3t}Res_{k_j}R(k).
        \ea
        \ee
        \end{subequations}
\end{enumerate}
\end{proof}

\subsection{Effective characterizations}
Substituting into the system (\ref{Phil3sys}) the expressions
\begin{subequations}
\be\label{Phij3eps}
\Phi_{ij}=\Phi^{(0)}_{ij}+\eps\Phi^{(1)}_{ij}+\eps^2\Phi^{(2)}_{ij}+\cdots,\quad i,j=1,2,3.
\ee
\be\label{g0eps}
g_0=\eps g_{01}+\eps^2 g_{02}+\cdots,
\ee
\be\label{g1eps}
g_1=\eps g_{11}+\eps^2 g_{12}+\cdots,
\ee
\be\label{g2eps}
g_2=\eps g_{21}+\eps^2 g_{22}+\cdots,
\ee
\end{subequations}
where $\eps>0$ is a small parameter, we find that the terms of $O(1)$ give $\Phi^{(0)}_{13}=\Phi^{(0)}_{23}=0$ and $\Phi^{(0)}_{33}=1$. Moreover, the terms of $O(\eps)$ give $\Phi^{(1)}_{33}=0$ and
\be\label{Oeps}
O(\eps):\quad \Phi^{(1)}_{13}(t,k)=\int_0^te^{-8ik^3(t-t')}(4k^2g_{01}+2ikg_{11}-g_{21})(t',k)dt',
\ee
From the above equation (\ref{Oeps}) we can get
\begin{subequations}
\be\label{Pi11}
\Pi^{(1)}_{1}(t,k)=12k^2\int_0^te^{-8ik^3(t-t')}g_{01}(t')dt',
\ee
\be\label{hatPi11}
\hat \Pi^{(1)}_{1}(t,k)=6ik\int_0^te^{-8ik^3(t-t')}g_{11}(t')dt',
\ee
\be\label{tildePi11}
\tilde \Pi^{(1)}_{1}(t,k)=-3\int_0^te^{-8ik^3(t-t')}g_{11}(t')dt',
\ee
\end{subequations}
\par
The Dirichlet problem can now be solved perturbatively as follows: assuming for simplicity that $s_{33}(k)$ has no zeros and expanding (\ref{DtoNg1}) and (\ref{DtoNg2}), we find
\begin{subequations}\label{DtoN1}
\be\label{DtoNg11}
\ba{rl}
g_{11}=&\frac{2}{\pi i}\int_{\pt D_3}\left[k\Pi^{(1)}_1(t,k)-\frac{3g_{01}(t)}{2i}\right]dk\\
&-\frac{2}{\pi i}\int_{\pt D_3}ke^{-8ik^3t}[(\alpha^2-\alpha)s_{131}(\alpha k)+(\alpha-\alpha^2)s_{131}(\alpha^2k)]dk
\ea
\ee
\be\label{DtoNg21}
\ba{rl}
g_{21}=&-\frac{4}{\pi}\int_{\pt D_3}\left[k^2\Pi^{(1)}_1(t,k)-\frac{3kg_{01}(t)}{2i}\right]dk\\
&+\frac{4}{\pi}\int_{\pt D_3}k^2e^{-8ik^3t}\left[(1-\alpha)s_{131}(\alpha k)+(1-\alpha^2)s_{131}(\alpha^2k)\right]dk
\ea
\ee
\end{subequations}
Using equation (\ref{Pi11}) to determine $\Pi^{(1)}_{1}$, we can determine $g_{11},g_{21}$ from (\ref{DtoN1}), then $\Phi^{(1)}_{13}$ can be found from (\ref{Oeps}), And these arguments can be extended to higher orders and also can be extended to the systems (\ref{Phi11sys}) and (\ref{Phi12sys}), thus yields a constructive scheme for computing $S(k)$ to all orders.
\par
Similarly, these arguments also can be used to the first Neumann problem and the second Neumann problem. That is to say, in all cases, the system can be solved perturbatively to all orders.

\appendix
\section{The asymptotic behavior of the functions $\{\mu_j(x,t,k)\}_1^3$}\label{apdix1}
We denote some symbols as follows:
\begin{subequations}
\be\label{MLam}
\Lam=\left(\ba{ll}\id_{2\times 2}&0\\0&-1\ea\right),
\ee
\be\label{MV}
\ba{l}
V_1=\left(\ba{ll}0&U^T\\-\bar U&0\ea\right),\\
V_2^{(2)}=4\left(\ba{ll}0&U^T\\-\bar U&0\ea\right),\\
V_2^{(1)}=2i\left(\ba{ll}U^T\bar U&U_x^T\\\bar U_x&-2|u|^2\ea\right),\\
V_2^{(0)}=-4|u|^2\left(\ba{ll}0&U^T\\-\bar U&0\ea\right)-\left(\ba{ll}0&U_{xx}^T\\-\bar U_{xx}&0\ea\right)+(u\bar u_x-u_x\bar u)\left(\ba{ll}\sig_3&0\\0&0\ea\right).
\ea
\ee
where $\id_{2\times 2}=\left(\ba{ll}1&0\\0&1\ea\right)$ and $U=(u,\bar u)$.
\end{subequations}
\par
From the Lax pair of $\mu$
\be\label{muLaxe}
\left\{
\ba{l}
\mu_x+[ik\Lam,\mu]=V_1\mu,\\
\mu_t+[4ik^3\Lam,\mu]=V_2\mu.
\ea
\right.
\ee
Suppose that
\be\label{muklarg}
\mu(x,t,k)=D_0+\frac{D_1}{k}+\frac{D_2}{k^2}+\frac{D_3}{k^3}+\cdots.
\ee
We substitute the equation (\ref{muklarg}) into the Lax pair (\ref{muLaxe}), and compare the order of $k$, we find that:
\begin{subequations}
\be\label{mulargkx}
\ba{ll}
O(k):&[i\Lam,D_0]=0,\\
O(1):&D_{0x}+[i\Lam,D_1]=V_1D_0,\\
O(k^{-1}):&D_{1x}+[i\Lam,D_2]=V_1D_1,\\
O(k^{-2}):&D_{2x}+[i\Lam,D_3]=V_1D_2,\\
\ea
\ee
\be\label{mulargkt}
\ba{ll}
O(k^3):&[4i\Lam,D_0]=0,\\
O(k^2):&[4i\Lam,D_1]=V_2^{(2)}D_0,\\
O(k^1):&[4i\Lam,D_2]=V_2^{(2)}D_1+V_2^{(1)}D_0,\\
O(1):&D_{0t}+[4i\Lam,D_3]=V_2^{(2)}D_2+V_2^{(1)}D_1+V_2^{(0)}D_0,\\
O(k^{-1}):&D_{1t}+[4i\Lam,D_4]=V_2^{(2)}D_3+V_2^{(1)}D_4+V_2^{(0)}D_1,\\
O(k^{-2}):&D_{2t}+[4i\Lam,D_5]=V_2^{(2)}D_4+V_2^{(1)}D_3+V_2^{(0)}D_2,\\
\ea
\ee
\end{subequations}
And we denote the $D_l$ by $D_l=\left(\ba{ll}D_{2\times 2}^{(l)}&D_{j3}^{(l)}\\D_{3j}^{(l)}&D_{33}^{(l)}\ea\right),\quad j=1,2$.
\par
Then, from $O(k^3)$,we have
\be\label{Ok3}
D_{j3}^{(0)}=0,\quad D_{3j}^{(0)}=0.
\ee
$O(k^2)$, we get
\begin{subequations}
\be\label{Ok2}
4i\left(\ba{ll}0&2D_{j3}^{(1)}\\-2D_{3j}^{(1)}&0\ea\right)=4\left(\ba{ll}0&U^TD_{33}^{(0)}\\-\bar U D_{2\times 2}^{(0)}&0\ea\right),
\ee
this implies that
\be\label{Ok2rsu}
\left\{
\ba{l}
D_{j3}^{(1)}=-\frac{i}{2}U^TD_{33}^{(0)}\\
D_{3j}^{(1)}=-\frac{i}{2}\bar UD_{2\times 2}^{(0)}.
\ea
\right.
\ee
\end{subequations}
$O(k)$, we find
\begin{subequations}
\be\label{Ok1}
\ba{l}
4i\left(\ba{ll}0&2D_{j3}^{(2)}\\-2D_{3j}^{(2)}&0\ea\right)=\\
4\left(\ba{ll}U^TD_{3j}^{(1)}&U^TD_{33}^{(1)}\\-\bar UD_{2\times 2}^{(1)}&-\bar UD_{j3}^{(1)}\ea\right)+2i\left(\ba{ll}U^T\bar UD_{2\times 2}^{(0)}&U_x^TD_{33}^{(0)}\\-\bar U_xD_{2\times 2}^{(0)}&-2|u|^2D_{33}^{(0)}\ea\right),
\ea
\ee
this implies that
\be\label{Ok1rsu}
\left\{
\ba{l}
D_{j3}^{(2)}=-\frac{i}{2}U^TD_{33}^{(1)}+\frac{1}{4}U_x^TD_{33}^{0}\\
D_{3j}^{(2)}=-\frac{i}{2}\bar UD_{2\times 2}^{(1)}-\frac{1}{4}\bar U_xD_{2\times 2}^{(0)}.
\ea
\right.
\ee
\end{subequations}
$O(1)$, we have
\begin{subequations}
\be\label{Ok0}
\ba{l}
\left(\ba{ll}D_{2\times 2t}^{(0)}&0\\0&D_{33t}^{(0)}\ea\right)+4i\left(\ba{ll}0&2D_{j3}^{(3)}\\-2D_{3j}^{(3)}&0\ea\right)=\\
4\left(\ba{ll}U^TD_{3j}^{(2)}&U^TD_{33}^{(2)}\\-\bar UD_{2\times 2}^{(2)}&-\bar UD_{j3}^{(2)}\ea\right)+2i\left(\ba{ll}U^T\bar UD_{2\times 2}^{(1)}+U_x^TD_{3j}^{(1)}&U^T\bar UD_{j3}^{(1)}+U_x^TD_{33}^{(1)}\\-\bar U_xD_{2\times 2}^{(1)}-2|u|^2D_{3j}^{(1)}&\bar U_xD_{j3}^{(1)}-2|u|^2D_{33}^{(1)}\ea\right)\\
-4|u|^2\left(\ba{ll}0&U^TD_{33}^{(0)}\\-\bar U D_{2\times 2}^{(0)}&0\ea\right)-\left(\ba{ll}0&U_{xx}^TD_{33}^{(0)}\\-\bar U_{xx} D_{2\times 2}^{(0)}&0\ea\right)\\
+(u\bar u_x-u_x\bar u)\left(\ba{ll}\sig_3D_{2\times 2}^{(0)}&0\\0&0\ea\right).
\ea
\ee
this implies that
\be\label{Ok0rsu}
\ba{l}
D_{2\times 2t}^{(0)}=0\quad D_{33t}^{(0)}=0\\
\left\{
\ba{l}
D_{j3}^{(3)}=-\frac{i}{2}U^TD_{33}^{(2)}+\frac{1}{4}U_x^TD_{33}^{(1)}+\frac{i}{4}|u|^2U^TD_{33}^{(0)}+\frac{i}{8}U_{xx}^TD_{33}^{(0)}\\
D_{3j}^{(3)}=-\frac{i}{2}\bar UD_{2\times 2}^{(2)}-\frac{1}{4}\bar U_xD_{2\times 2}^{(1)}+\frac{i}{4}|u|^2\bar UD_{2\times 2}^{(0)}+\frac{i}{8}\bar U_{xx}D_{2\times 2}^{(0)}.
\ea
\right.
\ea
\ee
\end{subequations}
$O(k^{-1})$, we get
\begin{subequations}
\be\label{Ok-1}
\ba{l}
\left(\ba{ll}D_{2\times 2t}^{(1)}&D_{j3t}^{(1)}\\D_{3jt}^{(1)}&D_{33t}^{(1)}\ea\right)+4i\left(\ba{ll}0&2D_{j3}^{(4)}\\-2D_{3j}^{(4)}&0\ea\right)=\\
4\left(\ba{ll}U^TD_{3j}^{(3)}&U^TD_{33}^{(3)}\\-\bar UD_{2\times 2}^{(3)}&-\bar UD_{j3}^{(3)}\ea\right)+2i\left(\ba{ll}U^T\bar UD_{2\times 2}^{(2)}+U_x^TD_{3j}^{(2)}&U^T\bar UD_{j3}^{(2)}+U_x^TD_{33}^{(2)}\\-\bar U_xD_{2\times 2}^{(2)}-2|u|^2D_{3j}^{(2)}&\bar U_xD_{j3}^{(2)}-2|u|^2D_{33}^{(2)}\ea\right)\\
-4|u|^2\left(\ba{ll}U^TD_{3j}^{(1)}&U^TD_{33}^{(1)}\\-\bar U D_{2\times 2}^{(1)}&-\bar UD_{j3}^{(1)}\ea\right)-\left(\ba{ll}U_{xx}^TD_{3j}^{(1)}&U_{xx}^TD_{33}^{(1)}\\-\bar U_{xx} D_{2\times 2}^{(1)}&-\bar U_{xx} D_{j3}^{(1)}\ea\right)\\
+(u\bar u_x-u_x\bar u)\left(\ba{ll}\sig_3D_{2\times 2}^{(1)}&\sig_3D_{j3}^{(1)}\\0&0\ea\right).
\ea
\ee
this implies that
\be\label{Ok-1rsu}
\ba{l}
\left\{
\ba{l}
D_{2\times 2t}^{(1)}=\frac{i}{2}\{U^T\bar U_{xx}+U_{xx}\bar U-U_x^T\bar U_x+6|u|^2U^T\bar U\}D_{2\times 2}^{(0)}\\
D_{33t}^{(1)}=-i\{u\bar u_{xx}+u_{xx}\bar u-u_x\bar u_x+6|u|^4\}D_{33}^{(0)}.
\ea
\right.\\
\left\{
\ba{l}
D_{j3}^{(4)}=\frac{1}{16}U_t^TD_{33}^{(0)}-\frac{i}{2}U^TD_{33}^{(3)}+\frac{1}{4}U_x^TD_{33}^{(2)}+\frac{i}{4}|u|^2U^TD_{33}^{(1)}+\frac{i}{8}U_{xx}^TD_{33}^{(1)}+\frac{1}{8}|u|^2U_x^TD_{33}^{(0)}\\
D_{3j}^{(3)}=-\frac{1}{16}\bar U_tD_{2\times 2}^{(0)}-\frac{i}{2}\bar UD_{2\times 2}^{(3)}-\frac{1}{4}\bar U_xD_{2\times 2}^{(2)}+\frac{i}{4}|u|^2\bar UD_{2\times 2}^{(1)}+\frac{i}{8}\bar U_{xx}D_{2\times 2}^{(1)}-\frac{1}{8}|u|^2\bar U_xD_{2\times 2}^{(0)}.
\ea
\right.
\ea
\ee
\end{subequations}
$O(k^{-2})$, we get
\begin{subequations}
\be\label{Ok-2}
\ba{l}
\left(\ba{ll}D_{2\times 2t}^{(2)}&D_{j3t}^{(2)}\\D_{3jt}^{(2)}&D_{33t}^{(2)}\ea\right)+4i\left(\ba{ll}0&2D_{j3}^{(5)}\\-2D_{3j}^{(5)}&0\ea\right)=\\
4\left(\ba{ll}U^TD_{3j}^{(4)}&U^TD_{33}^{(4)}\\-\bar UD_{2\times 2}^{(4)}&-\bar UD_{j3}^{(4)}\ea\right)+2i\left(\ba{ll}U^T\bar UD_{2\times 2}^{(3)}+U_x^TD_{3j}^{(3)}&U^T\bar UD_{j3}^{(3)}+U_x^TD_{33}^{(3)}\\-\bar U_xD_{2\times 2}^{(3)}-2|u|^2D_{3j}^{(3)}&\bar U_xD_{j3}^{(3)}-2|u|^2D_{33}^{(3)}\ea\right)\\
-4|u|^2\left(\ba{ll}U^TD_{3j}^{(2)}&U^TD_{33}^{(2)}\\-\bar U D_{2\times 2}^{(2)}&-\bar UD_{j3}^{(2)}\ea\right)-\left(\ba{ll}U_{xx}^TD_{3j}^{(2)}&U_{xx}^TD_{33}^{(2)}\\-\bar U_{xx} D_{2\times 2}^{(2)}&-\bar U_{xx} D_{j3}^{(2)}\ea\right)\\
+(u\bar u_x-u_x\bar u)\left(\ba{ll}\sig_3D_{2\times 2}^{(2)}&\sig_3D_{j3}^{(2)}\\0&0\ea\right).
\ea
\ee
this implies that
\be\label{Ok-2rsu}
\left\{
\ba{l}
\ba{rl}D_{2\times 2t}^{(2)}=&\frac{i}{2}\{U^T\bar U_{xx}+U_{xx}\bar U-U_x^T\bar U_x+6|u|^2U^T\bar U\}D_{2\times 2}^{(1)}\\&+\{-\frac{1}{4}U^T\bar U_t+\frac{1}{2}|u|^2(u\bar u_x-u_x\bar u)\sig_3+\frac{1}{4}(u_{xx}\bar u_x-u_x\bar u_{xx})\sig_3\}\ea\\
D_{33t}^{(2)}=-i\{u\bar u_{xx}+u_{xx}\bar u-u_x\bar u_x+6|u|^4\}D_{33}^{(1)}-\frac{1}{4}(|u|^2)_tD_{33}^{(0)}.
\ea
\right.
\ee
\end{subequations}
Also, from the $x-$part of the Lax pair, we have the following equations
\begin{subequations}
\be\label{Ok0x}
D_{2\times 2x}^{(0)}=0,\quad D_{33x}^{(0)}=0.
\ee
\be\label{Ok-1x}
\left\{
\ba{l}
D_{2\times 2x}^{(1)}=-\frac{i}{2}U^T\bar UD_{2\times 2}^{(0)}\\
D_{33x}^{(1)}=i|u|^2D_{33}^{(0)}.
\ea
\right.
\ee
\be\label{Ok-2x}
\left\{
\ba{l}
D_{2\times 2x}^{(2)}=-\frac{i}{2}U^T\bar UD_{2\times 2}^{(1)}-\frac{1}{4}U^T\bar U_xD_{2\times 2}^{(0)}\\
D_{33x}^{(2)}=i|u|^2D_{33}^{(1)}-\frac{1}{4}(|u|^2)_xD_{33}^{(0)}.
\ea
\right.
\ee
\end{subequations}
Then from the integral contours $\gam_j$, we can get
\be\label{D0}
D_{2\times 2}^{(0)}=\id_{2\times 2},\quad D_{33}^{(0)}=1.
\ee

\section{The asymptotic behavior of $c_j(t,k)$}\label{apdix2}
Let
\[
\mu_2(0,t,k)=\left(\ba{ll}\Phi_{2\times 2}&\Phi_{j3}\\\Phi_{3j}&\Phi_{33}\ea\right).
\]
The global relation shows that
\be\label{cjdef}
\Phi_{2\times 2}\frac{s_{j3}}{s_{33}}e^{-8ik^3t}+\Phi_{j3}=c_j.
\ee
And from equation
\[
\mu_t+[4ik^3\Lam,\mu]=V_2\mu.
\]
we get
\be\label{Phit}
\ba{l}
\left(\ba{ll}\Phi_{2\times 2}&\Phi_{j3}\\\Phi_{3j}&\Phi_{33}\ea\right)_t+4ik^3\left(\ba{ll}0&2\Phi_{j3}\\-2\Phi_{3j}&0\ea\right)=4k^2\left(\ba{ll}U^T\Phi_{3j}&U^T\Phi_{33}\\-\bar U\Phi_{2\times 2}&-\bar U\Phi_{j3}\ea\right)\\
+2ik\left(\ba{ll}U^T\bar U\Phi_{2\times 2}+U_x^T\Phi_{3j}&U^T\bar U\Phi_{j3}+U_x^T\Phi_{33}\\\bar U_x\Phi_{2\times 2}-2|u|^2\Phi_{3j}&-\bar U_x\Phi_{j3}-2|u|^2\Phi_{33}\ea\right)-4|u|^2\left(\ba{ll}U^T\Phi_{3j}&U^T\Phi_{33}\\-\bar U\Phi_{2\times 2}&-\bar U\Phi_{j3}\ea\right)\\
-\left(\ba{ll}U_{xx}^T\Phi_{3j}&U_{xx}^T\Phi_{33}\\-\bar U_{xx}\Phi_{2\times 2}&-\bar U_{xx}\Phi_{j3}\ea\right)+(u\bar u_x-u_x\bar u)\left(\ba{ll}\sig_3\Phi_{2\times 2}&\sig_3\Phi_{j3}\\0&0\ea\right).
\ea
\ee
From the second column of the equation (\ref{Phit}) we get
\be\label{Phi2t}
\left\{
\ba{l}
\ba{rl}\Phi_{j3t}+8ik^3\Phi_{j3}=&4k^2U^T\Phi_{33}+2ik(U^T\bar U\Phi_{j3}+U^T_x\Phi_{33})\\&-4|u|^2U^T\Phi_{33}-U_{xx}^T\Phi_{33}+(u\bar u_x-u_x\bar u)\sig_3\Phi_{j3}\ea\\
\Phi_{33t}=-4k^2\bar U\Phi_{j3}+2ik(\bar U_x\Phi_{j3}-2|u|^2\Phi_{33})+4|u|^2\bar U\Phi_{j3}+\bar U_{xx}\Phi_{j3}.
\ea
\right.
\ee
Suppose
\be\label{Phi2albt}
\left(\ba{l}\Phi_{j3}\\\Phi_{33}\ea\right)=(\alpha_0(t)+\frac{\alpha_1(t)}{k}+\frac{\alpha_2(t)}{k^2}+\cdots)+(\beta_0(t)+\frac{\beta_1(t)}{k}+\frac{\beta_2(t)}{k^2}+\cdots)e^{-8ik^3t}
\ee
where the coefficients $\alpha_l(t)$ and $\beta_l(t)$, $l\ge 0$, are independent of $k$. To determine these coefficients,we substitute the above equation into equation (\ref{Phi2t}) and use the initial conditions
\[
\alpha_0(0)+\beta_0(0)=(0_{1\times 2},1)^T,\quad \alpha_1(0)+\beta_1(0)=(0_{1\times 2},0)^T.
\]
Then we get
\be\label{Phi2albtrsu}
\ba{rl}
\left(\ba{l}\Phi_{j3}\\\Phi_{33}\ea\right)=&\left(\ba{l}0_{1\times 2}\\1\ea\right)+\frac{1}{k}\left(\ba{l}\Phi_{j3}^{(1)}\\\Phi_{33}^{(1)}\ea\right)+\frac{1}{k^2}\left(\ba{l}\Phi_{j3}^{(2)}\\\Phi_{33}^{(2)}\ea\right)+\cdots\\
&+\left[-\frac{1}{k}\left(\ba{l}\Phi_{j3}^{(1)}(0)\\0\ea\right)+\cdots\right]e^{-8ik^3t}
\ea
\ee
From the first column of the equation (\ref{Phit}) we get
\be\label{Phi1t}
\left\{
\ba{l}
\ba{rl}\Phi_{2\times 2t}=&4k^2U^T\Phi_{3j}+2ik(U^T\bar U\Phi_{2\times 2}+U_x^T\Phi_{3j})\\&-4|u|^2U^T\Phi_{3j}-U_{xx}^T\Phi_{3j}+(u\bar u_x-u_x\bar u)\sig_3\Phi_{2\times 2}\ea\\
\Phi_{3jt}-8ik^3\Phi_{3j}=-4k^2\bar U\Phi_{2\times 2}+2ik(\bar U_x\Phi_{2\times 2}-2|u|^2\Phi_{3j})+4|u|^2\bar U\Phi_{2\times 2}+\bar U_{xx}\Phi_{2\times 2}.
\ea
\right.
\ee
Suppose
\be\label{Phi1xinu}
\left(\ba{l}\Phi_{2\times 2}\\\Phi_{3j}\ea\right)=(\xi_0(t)+\frac{\xi_1(t)}{k}+\frac{\xi_2(t)}{k^2}+\cdots)+(\nu_0(t)+\frac{\nu_1(t)}{k}+\frac{\nu_2(t)}{k^2}+\cdots)e^{8ik^3t}
\ee
where the coefficients $\xi_l(t)$ and $\nu_l(t)$, $l\ge 0$, are independent of $k$. To determine these coefficients,we substitute the above equation into equation (\ref{Phi1t}) and use the initial conditions
\[
\xi_0(0)+\nu_0(0)=(\id_{2\times 2},0_{2\times 1})^T,
\]
Then we get
\be\label{Phi1xinursu}
\ba{rl}
\left(\ba{l}\Phi_{2\times 2}\\\Phi_{3j}\ea\right)=&\left(\ba{l}\id_{2\times 2}\\0_{2\times 1}\ea\right)+\frac{1}{k}\left(\ba{l}\Phi_{2\times 2}^{(1)}\\\Phi_{3j}^{(1)}\ea\right)+\cdots\\
&+\left[\frac{1}{k^2}\left(\ba{l}0\\\nu_2^{(2)}\ea\right)+\cdots\right]e^{8ik^3t}
\ea
\ee
So, from the equation (\ref{cjdef}) and the asymptotic of $s_{j3}(k)$ and $s_{33}(k)$, we get the asymptotic behavior of $c_j(t,k)$ as $k\rightarrow \infty$,
\be\label{cjlargk}
c_j(t,k)=\frac{\Phi_{j3}^{(1)}}{k}+\frac{\Phi_{j3}^{(2)}}{k^2}+\frac{\Phi_{j3}^{(3)}}{k^3}+\cdots.
\ee


\begin{thebibliography}{XXXX}

\bibitem{f1} A. S. Fokas, {\em A unified transform method for solving linear and certain nonlinear PDEs}, Proc. R. Soc. Lond. A {\bf
    453}(1997), 1411-1443.

\bibitem{f2} A. S. Fokas, {\em Integrable nonlinear evolution equations on the half-line}, Commun. Math. Phys. {\bf 230}(2002), 1-39.

\bibitem{f3} A.S. Fokas, {\em A Unified Approach to Boundary Value Problems}, in: CBMS-NSF Regional Conference Series in Applied Mathematics, SIAM, 2008.

\bibitem{abmfs1} A. Boutet De Monvel, A.S. Fokas, D. Shepelsky, {\em Integrable nonlinear evolution equations on a finite interval}, Comm. Math. Phys. {\bf 263} (2006) 133¨C172.

\bibitem{abmfs2} A. Boutet de Monvel,A.S.Fokas,D.Shepelsky, {\em The mKDV equation on the half-line}, J. Inst. Math. Jussieu.{\bf 3}(2004), 139-164.

\bibitem{fis} A. S. Fokas, A. R. Its and L. Y. Sung, {\em The nonlinear Schr\"odinger equation on the half-line}, Nonlinearity. {\bf  18}(2005), 1771-1822.

\bibitem{k} S. Kamvissis, {\em Semiclassical nonlinear Schrödinger on the half line}, J. Math. Phys. {\bf 44} (2003) 5849–5868.

\bibitem{l1} J. Lenells, {\em Boundary value problems for the stationary axisymmetric Einstein equations: a disk rotating around a black hole}, Comm. Math. Phys. {\bf 304} (2011) 585-635.
\bibitem{l2} J. Lenells, A.S. Fokas, {\em Boundary-value problems for the stationary axisymmetric Einstein equations: a rotating disc}, Nonlinearity {\bf 24} (2011) 177-206.

\bibitem{ss} N. Sasa, J. Satsuma, {\em New-type of soliton solutions for a higher-order nonlinear Schrödinger equation}, J. Phys. Soc. Japan {\bf 60} (1991) 409–417.

\bibitem{k} D.J. Kaup, {\em On the inverse scattering problem for cubic eigenvalue problems of the class $\psi_{xxx}+6Q\psi_x+6R\psi=\lam\psi$}, Stud. Appl. Math. {\bf 62} (1980) 189-216.

\bibitem{l3} J. Lenells, {\em Initial-boundary value problems for integrable evolution equations with $3\times 3$ Lax pairs}, Physica D {\bf 241}(2012) 857-875.

\bibitem{l4} J. Lenells, {\em The Degasperis-Procesi equation on the half-line}, Nonlinear Analysis {\bf 76}(2013) 122-139.

\bibitem{bc} R. Beals and R. Coifman, {\em Scattering and inverse scattering for first order systems},  Comm. in Pure and Applied Math. {\bf 37}(1984), 39--90.

\bibitem{dz} P. Deift and  X. Zhou, {\em A steepest descent method for oscillatory Riemann--Hilbert problems},  Ann. of Math. (2) {\bf 137}(1993), 295-368.

\bibitem{pdl} P. D. Lax, {\em Integrals of nonlinear equations of evolution and solitary waves}, Comm. Pure. Appl. Math.{\bf 21}(1968), 467-490.

\bibitem{fl1} A. S. Forkas and J. Lenells, {\em The unified method: \Rmnum{1}.nonlinearizable problem on the half-line}, J. Phys. A: Math. Theor. {\bf 45}(2012) 195201;

\bibitem{fl2} J. Lenells and A. S. Forkas, {\em The unified method: \Rmnum{2}. NLS on the half-line t-periodic boundary conditions}, J. Phys. A: Math. Theor. {\bf 45}(2012) 195202;

\bibitem{fl3} J. Lenells and A. S. Forkas, {\em The unified method: \Rmnum{3}. Nonlinearizable problem on the interval}, J. Phys. A: Math. Theor. {\bf 45}(2012) 195203;

\end{thebibliography}
\end{document}